\documentclass[11pt, a4paper]{googledeepmind}

\usepackage{microtype}
\usepackage{graphicx}
\usepackage{subcaption}
\usepackage{multirow}
\usepackage{booktabs}
\usepackage{natbib}

\setcitestyle{numbers}
\setcitestyle{square}
\setcitestyle{comma}

\usepackage{setspace}
\usepackage[dvipsnames]{xcolor}
\usepackage{longtable}
\usepackage{array}
\usepackage{tabularx}
\usepackage{wrapfig}
\usepackage{bbm}
\usepackage{amsmath}
\usepackage{amssymb}
\usepackage{mathtools}
\usepackage{amsthm}
\usepackage{algorithm}
\usepackage{algorithmic}
\usepackage{amsfonts}
\usepackage[utf8]{inputenc}
\usepackage[T1]{fontenc}
\usepackage{hyperref}
\definecolor{citeColor}{RGB}{0,20,115}
\hypersetup{colorlinks,linkcolor={citeColor},citecolor={citeColor},urlcolor={citeColor}}
\usepackage{url}
\usepackage{nicefrac}
\usepackage{enumerate}
\usepackage{bm}
\usepackage{enumitem}
\usepackage{makecell}
\usepackage{colortbl}
\usepackage{bbding}
\usepackage{caption}
\usepackage{multibib}
\usepackage{tcolorbox}
\usepackage{fontawesome6}
\tcbuselibrary{skins,breakable}

\newtcolorbox{promptbox}[1]{
  enhanced, breakable,
  colback=gray!5, colframe=black!70,
  colbacktitle=black!75, coltitle=white,
  fonttitle=\small\bfseries,
  title={#1},
  boxrule=0.5pt, arc=1.5pt,
  left=6pt, right=6pt, top=4pt, bottom=4pt,
  fontupper=\small,
}

\newtcolorbox{prompttemplate}[1]{
  enhanced, breakable,
  colback=gray!5, colframe=black!60,
  colbacktitle=black!70, coltitle=white,
  fonttitle=\small\bfseries,
  title={#1},
  boxrule=0.5pt, arc=1.5pt,
  left=6pt, right=6pt, top=4pt, bottom=4pt,
  fontupper=\small\ttfamily,
}

\usepackage[normalem]{ulem}
\usepackage{varwidth}
\usepackage{tikz}
\usetikzlibrary{calc}
\usetikzlibrary{positioning, shapes, fit, backgrounds, decorations.pathreplacing}
\usetikzlibrary{arrows.meta}

\usepackage[capitalize,noabbrev]{cleveref}

\usepackage{etoc}
\etocdepthtag.toc{mtchapter}
\etocsettagdepth{mtchapter}{subsection}
\etocsettagdepth{mtappendix}{none}

\theoremstyle{plain}
\newtheorem{theorem}{Theorem}[section]
\newtheorem{proposition}[theorem]{Proposition}
\newtheorem{lemma}[theorem]{Lemma}
\newtheorem{corollary}[theorem]{Corollary}
\theoremstyle{definition}

\newtheorem{assumption}[theorem]{Assumption}

\theoremstyle{remark}
\newtheorem{remark}[theorem]{Remark}

\usepackage{pifont}

\definecolor{boxfill}{RGB}{246,243,235}
\definecolor{nullfill}{RGB}{235,230,216}
\definecolor{boxedge}{RGB}{130,124,108}
\definecolor{inkdark}{RGB}{28,27,24}
\definecolor{inkdim}{RGB}{90,88,82}
\definecolor{accent}{RGB}{170,100,40}

\newcommand{\Excl}{\mathrm{Excl}}
\newcommand{\HHI}{\mathrm{HHI}}
\newenvironment{restatement}[1]{\par\medskip\noindent\textbf{#1.}\itshape\enskip}{\par\medskip}

\title{When AI Writes, Who Gets Cited? Evidence of Citation Monoculture Across Language Models}

\author{
    \textbf{Sina Alemohammad}$^{1,\dagger}$\;
    \textbf{Denghui Zhang}$^{2}$\;
    \textbf{Bolong Tang}$^{1}$\;
    \textbf{Anthony Qin}$^{3}$\;
    \textbf{Gengchen Mai}$^{1}$
    \\
    \textbf{Ahmed Abbasi}$^{4}$\;
    \textbf{Richard Baraniuk}$^{5}$\;
    \textbf{Zhangyang Wang}$^{1}$
    \vspace{2mm} \\
    $^{1}$The University of Texas at Austin \quad
    $^{2}$Stevens Institute of Technology \\
    $^{3}$Washington University in St.\ Louis \quad
    $^{4}$University of Notre Dame \\
    $^{5}$Rice University \\[2mm]
    \textbf{\textcolor{MidnightBlue}{Code and data}:
    \url{https://github.com/VITA-Group/Citation-Collapse}}
    \\[-6mm]
}

\correspondingauthor{$^\dagger$Correspondence to sinaalemohammad@gmail.com.}

\begin{abstract}
As language models move from drafting prose to running literature-search agents with tool calls, fabricated references are becoming easier to catch and constrain. The harder failure begins after every candidate is real: different models may still select the same narrow subset, producing \textbf{citation monoculture} without any single citation being wrong. We isolate this effect on 120 real papers. Eleven models from three vendors choose at most ten papers from uniformly random panels of thirty, with real titles and abstracts but fabricated authors, reassigned years, and hidden venues and citation counts. Each run is compared with indifferent selection on the same panel and realized budget. We find all eleven models to concentrate sharply. The top decile receives 23.3--30.2\% of citations against 15.6\% under the null, and one component explains 68--73\% of variation across their preference maps; cross-vendor agreement nearly matches within-vendor agreement. We mathematically formalize the task as fixed-budget subset selection. Our theory turns these aggregate patterns into identifiable mechanisms: an exchangeability bound shows that the observed exclusions require stable paper-level favoritism; a spectral decomposition explains how shared preference produces concentration and limits model mixing; and a rarity theorem predicts the recursive competition effect verified within panels. Each theoretical result connects directly to the experiments: the bound rejects a mapless selector for every model, the decomposition explains why the best cross-fitted mixture still retains 55\% of the excess, and the rarity theorem is confirmed by a within-panel competition test. Controlled paraphrase, content-slot crossover, and design resampling attribute about 90\% of GPT-5 mini's map variance to paper content, with the dominant content effect replicated in GPT-4.1 mini. Eight domain experts select from the same blinded panels under the same cap; their pooled choices show no comparable shared preference, while model concentration persists in selection-only mode. Thus, \textit{even when every reference is real and every paper is equally visible, current language models impose a common content-level filter on scientific attention. Equalizing retrieval or mixing vendors is therefore insufficient; the shared preference map itself must be changed.}
\end{abstract}

\begin{document}

\maketitle

\begin{figure}[!ht]
    \centering
    \includegraphics[width=\textwidth]{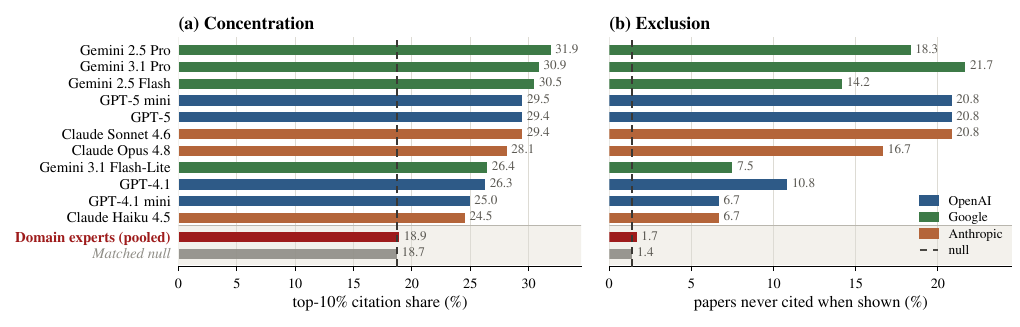}
    \caption{\textbf{Every model concentrates citations and excludes
    papers far beyond indifferent selection; domain experts show no comparable
    effect.} On the
    53 prompts an expert also completed, with byte-identical panels and
    budgets: (a)~top-10\% citation share and (b)~shown papers never cited
    across roughly thirteen exposures. Every model sits well right of the
    matched null, the same model choosing indifferently on the same panel
    (dashed: 18.7\% and 1.4\%); pooled experts sit at the null on both
    margins (18.9\% and 1.7\%). Every model also falls outside the entire simulated null support on the
    exchangeability test,
    while the experts sit inside it.}
    \label{fig:teaser}
\end{figure}


\section{Introduction}
\label{sec:intro}
 
For fifty years the concentration of scientific credit has had a name, the
Matthew effect, and a known engine made of status, visibility, and
accumulated citation count \citep{merton1968matthew, price1976general}. That
engine has always run through human readers, who can see who is already rich
and cite accordingly. The reader is now changing. Generative models are
moving from polishing sentences to drafting the literature-review and
citation-selection steps that shape which work receives attention
\citep{liang2024mapping, hao2024expand}, and the rich-get-richer dynamic has
to be asked again for a citer that has never been shown who is rich. What
gets audited today is fabrication, references that do not exist
\citep{walters2023fabrication, agrawal2024language}, but a system can produce
locally valid references while narrowing scientific attention globally. We
ask whether it does. When every candidate is real, directly shown, and
stripped of every displayed signal of status or count, does citation credit
still condense, and is the condensation one shared thing?

Measuring this in the wild is hopeless, because a model's taste is confounded
with retrieval rankings, prestige cues, and fabrication. Our benchmark
removes them one at a time. A fixed corpus of 120 real papers, from which
every prompt draws a uniformly random panel of 30, removes retrieval.
Candidates keep their real titles and abstracts, but their author names are
fabricated, their years randomly reassigned, and their citation counts
hidden, which removes prestige. Hallucinated citations are rare and are
removed before analysis. Each prompt may cite at most 10 of the 30 shown,
because without scarcity models cite nearly everything and look indifferent.
The budget is what forces a choice. Every run is then compared against the
same model, on the same panel, under the same budget, choosing indifferently,
so that every excess reported here means distance from a model's own behavior
under indifference.

What survives the removals is large and, more importantly, singular. Every
model tested concentrates citations well beyond its null and persistently
leaves papers uncited despite dozens of exposures, and concentration,
exclusion, and the homogenization of bibliographies all move together, a
triplet we call citation collapse (Figure~\ref{fig:teaser}). Behind it lies
nearly one object, a single preference map, a stable ranking of the same
papers applied regardless of prompt, which eleven models trained by three
independent organizations reproduce on blinded abstracts, with agreement
across vendors nearly matching agreement within them. This is algorithmic
monoculture \citep{kleinberg2021algorithmic, bommasani2022picking}, measured
rather than posited, and it prices the obvious remedy. Because the map is
shared, mixing models removes only its idiosyncratic part, and the best
mixture over the eleven still leaves more than half the excess. Whether that
map reflects an absorbed citation consensus or a convergent judgment of what
a citable paper reads like, every result below holds under either reading.

The natural excuse is relevance, that any competent reader of the same
abstracts would reproduce the shared ranking. So we gave the identical task
to domain experts. Eight of them completed 53 of the 120 prompts, 41 by a
single annotator, with the same panels, the same blinding, and the same
budget, and their pooled selections are statistically indistinguishable from
indifferent choice. A consequence of exchangeability bounds how often a shown
paper can go entirely uncited, and every model falls outside the entire
simulated null support while the experts sit inside it. Given the size
and imbalance of the panel we read this as a bound on how much map the
experts could be carrying undetected. A protocol-matched sensitivity
calculation puts it at 7.5\% of the average model's, rather than as a
difference in kind.

One-shot use is not how this technology will be used, so we run the benchmark
recursively, round after round, with model-written papers joining the
candidate pool until the original papers are rare in any panel. Exposure
stays random, and being cited never changes what is shown next, so this is
not a citation-to-exposure feedback loop. What changes is the competition. Each surviving real
paper is cited on a rising share of its appearances, while the real papers'
share of all citations falls from everything to 15.6\%. A fixed preference
meeting a thinning class predicts the first of these, but the trajectory
cannot separate rarity from the other things that shift alongside it, so we
test the mechanism inside single panels at a fixed round, where no drift over
time can contribute. The filter does not fade with repeated use, and it is
not amplified by it either. It is concentrated onto fewer targets.

One benchmark cannot settle what models do to a literature. Ours is a single
topic, a narrow range of task formats, and a budget scarce enough that preference has
somewhere to show. It measures choice among papers already placed in front of
the model, not retrieval, not what reaches publication, and not what authors
do with a draft they are handed. What it can do is separate a model's
preference from the status signals that normally travel with a citation,
which is the one thing observational data cannot.

This paper makes four contributions.
\begin{itemize}
    \item \textbf{A benchmark} that isolates citation choice from retrieval,
    prestige, and fabrication: real papers, random exposure, blinded
    metadata, a hard citation budget, and a null that replays every choice
    indifferently on the same panel.
    \item \textbf{A theory exact for that benchmark}, rather than adapted
    from network growth: the null as its uniform case, an identity equating
    excess concentration with bibliography homogenization, an impossibility
    bound that any selector free of stable paper-level favoritism must
    satisfy, and a rarity theorem for
    recursive use.
    \item \textbf{The measured amplifier}: a single preference map shared
    across models and vendors, not detected in domain experts on identical
    panels, predictable from paper text alone, carried by the record's
    title and abstract rather than by anything else attached to it, and not
    removed by mixing models, whose best available combination still leaves
    more than half the excess.
    \item \textbf{What repeated use does}: a fixed preference meeting a
    diluting catalogue raises the rate at which each surviving real paper is
    cited while lowering the share of citations they receive together, with
    the mechanism verified within single panels at fixed time rather than
    inferred from trajectories.
\end{itemize}


\section{The Game and Its Null}
\label{sec:benchmark}

\textbf{The corpus and the game.}
The benchmark is built on 120 real knowledge-distillation papers collected
from arXiv (published 2015--2022) with between 50 and 500 citations at
collection time, a band that excludes both obscure and canonical work. Every
paper is shown as a blinded record. It keeps its real title and abstract,
but we replace author names with fabricated ones, reassign publication
years, and hide citation counts and venues, so no prestige signal survives
(App.~\ref{app:corpus}). Each round consists of $M = 120$ prompts, half
asking for a short review and half for a position piece. A prompt sees a
\emph{panel} of $m=30$ papers drawn uniformly at random from the current
catalogue and may cite at most 10 of them; citations are parsed
deterministically from the generated text, and hallucinated references, a
mean per-prompt rate of at most 1.2\% for any model, are removed before
every analysis (App.~\ref{app:parsing}). Because each null inherits the
realized budget of its prompt, and removal can lower that budget, the null is conditioned on
a quantity the model's own behavior can move. The dependence is slight, since mean distinct papers cited runs 9.62--10.00
across models, but it is a dependence of the baseline on the outcome and we
note it rather than assume it away. Round 0 is the static benchmark, run on eleven
models from three vendors. In the recursion, each round's 120 model-written
papers join the catalogue of the next round, so the catalogue grows by 120
papers each round, to 1{,}440 at round 11, while panels stay uniformly
random.
Figure~\ref{fig:pipeline} shows the pipeline; Table~\ref{tab:design} lists
the design parameters.

\begin{figure}[t]
        \centering
    \resizebox{\textwidth}{!}{%
    \begin{tikzpicture}[
        font=\footnotesize,
        stage/.style={
            draw=boxedge, line width=0.55pt, rounded corners=2.5pt,
            fill=boxfill, text width=2.62cm, align=center,
            minimum height=1.85cm, inner xsep=3pt, inner ysep=5pt
        },
        flow/.style={-{Stealth[length=2.6mm,width=1.9mm]}, line width=0.7pt, draw=inkdim},
        loop/.style={-{Stealth[length=2.6mm,width=1.9mm]}, line width=0.75pt, draw=accent},
        lab/.style={font=\scriptsize\itshape, text=inkdim, inner sep=1pt},
        node distance=0.62cm
    ]
    \node[stage] (cat) {%
        \textbf{catalogue $U_t$}\\[1.5pt]
        {\color{inkdim}120 seeds $+$ all\\ generated papers\\ so far}};
    \node[stage, right=of cat] (panel) {%
        \textbf{panel $E_p$}\\[1.5pt]
        {\color{inkdim}30 papers, uniform;\\ blinded authors,\\ years; no counts}};
    \node[stage, right=of panel] (write) {%
        \textbf{model writes}\\[1.5pt]
        {\color{inkdim}review / position\\ piece, citing\\ $\le 10$ shown}};
    \node[stage, right=of write] (bib) {%
        \textbf{bibliography $C_p$}\\[1.5pt]
        {\color{inkdim}$|C_p| = k_p$;\\ hallucinations\\ removed}};
    \node[stage, right=1.42cm of bib, fill=nullfill] (null) {%
        \textbf{the null}\\[1.5pt]
        {\color{inkdim}same panel,\\ same $k_p$,\\ uniform picks}};
    \draw[flow] (cat) -- (panel)
        node[lab, midway, above=2pt] {$\times$120};
    \draw[flow] (panel) -- (write);
    \draw[flow] (write) -- (bib);
    \draw[flow] (bib) -- (null)
        node[lab, midway, above=3pt] {compare};
    \node[stage, text width=6.35cm, minimum height=1.02cm]
        (rec) at ($(panel.south)!0.38!(bib.south) + (0,-1.42)$) {%
        {\color{inkdim}the 120 generated papers of round $t$ join the catalogue
        of round $t{+}1$:\; the round-$t$ catalogue holds $120\,(t{+}1)$ papers}};
    \draw[loop] (bib.south) to[out=-90, in=0] (rec.east);
    \draw[loop] (rec.west) to[out=180, in=-90] (cat.south);
    \node[lab, text=accent, anchor=north east] at ($(rec.west)+(-0.25,-0.24)$)
        {recursion};
    \node[lab, anchor=north, align=center] at ($(null.south)+(0,-0.28)$)
        {every ``excess'' in the paper\\ $=$ model $-$ this null};
    \end{tikzpicture}}
    \caption{\textbf{The benchmark pipeline.} A prompt draws a uniformly
    random panel of 30 papers from the current catalogue, shown with blinded
    metadata; the model writes a short review or position piece
    citing at most 10 of them; the parsed bibliography is compared with the
    null: the same panel and the same realized budget, with uniform picks. In
    the recursion, each round's 120 generated papers join the next round's
    catalogue. Every ``excess'' reported in this paper means model minus
    null.}
    \label{fig:pipeline}
\end{figure}

\begin{table}[t]
    \centering
    \caption{Design parameters. Parsing rules, prompt templates, and
    compliance details are in App.~\ref{app:benchmark}.}
    \label{tab:design}
    \footnotesize
    \begin{tabular}{@{}ll@{}}
        \toprule
        Seed corpus & 120 real knowledge-distillation papers (arXiv 2015--2022; 50--500 citations) \\
        Displayed metadata & real title and abstract; fabricated authors; randomly reassigned years; counts hidden \\
        Prompts per round & 120 (60 literature reviews, 60 position pieces) \\
        Panel & $m=30$ papers, uniform without replacement from the catalogue \\
        Budget & at most 10 citations, all from the shown panel \\
        Rounds & 12; the round-$t$ catalogue holds $120\,(t{+}1)$ papers ($1{,}440$ at round 11) \\
        Models & round 0: eleven models, three vendors; recursion: eight (four OpenAI, four Gemini) \\
        Null & same panel, same realized budget, uniform picks (150 redraws at round 0; 40 per round) \\
        \bottomrule
    \end{tabular}
\end{table}

\textbf{The choice law.}
Write $U_t$ for the catalogue at round $t$, of size $N_t$; $E_p \subset U_t$
for the panel shown to prompt $p$; and $C_p \subseteq E_p$ for its parsed
bibliography, with realized budget $k_p = |C_p| \le 10$. We model the
selection as a weighted choice of a $k_p$-subset. Each paper $j$ carries a
latent weight $w_{pj} > 0$ under prompt $p$, and
\begin{equation}
    \Pr\left(C_p = C \mid E_p, k_p\right)
    \;=\;
    \frac{\prod_{j \in C} w_{pj}}{e_{k_p}\!\left(w_{p,\,E_p}\right)},
    \qquad |C| = k_p,
    \label{eq:choicelaw}
\end{equation}
where $e_k$ is the elementary symmetric polynomial over the shown weights.
This is the conditional law of independent inclusion decisions given the
realized budget, so it assumes no coordination across papers beyond the
budget itself. The weights are indexed by prompt, so the law does not
presume a single prompt-invariant ranking; whether one selector's $M$
prompts in fact carry one map is the empirical question
Section~\ref{sec:onemap} answers. Two symbols recur throughout: the probability that prompt $p$ cites a
shown paper, $r_{pj} = \Pr(j \in C_p \mid E_p, k_p)$, whose average over
the prompts that show a paper is its citation rate when shown; and the
exposure probability $s_t = m / N_t$, which falls as the catalogue grows
(we drop the subscript and write $s$ when the round is fixed). The
concentration and exclusion identities use the randomized design and
conditional prompt independence, not the choice law itself; the rarity
theorem of Section~\ref{sec:recursion} additionally uses the weighted
fixed-budget law and the common-lift model.

\textbf{The null.}
The benchmark's baseline is the uniform special case of the choice law
itself, which we state once and use everywhere.
\begin{proposition}[Protocol-matched null]
\label{prop:null}
If all shown weights are equal, every $k_p$-subset of the panel is equally
likely; consequently $\Pr(j \in C_p \mid j \in E_p) = k_p / m$ for every
shown paper, and the unconditional citation probability of any catalogue
paper is $k_p / N_t$.
\end{proposition}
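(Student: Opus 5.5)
The argument has three steps, each resting on the previous one: the uniform-subset claim, then the per-shown-paper inclusion probability, then the unconditional probability over the random panel.

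\textbf{Uniform subsets.} Substitute $w_{pj}=w$ for all $j\in E_p$ into the choice law \eqref{eq:choicelaw}. The numerator becomes $w^{k_p}$ for every admissible $C$. The denominator is $e_{k_p}(w,\dots,w)=\binom{m}{k_p}w^{k_p}$, since it is a sum of $\binom{m}{k_p}$ identical monomials. Hence $\Pr(C_p=C\mid E_p,k_p)=\binom{m}{k_p}^{-1}$ for every $k_p$-subset $C\subseteq E_p$. The common value $w$ cancels, so only equality of the weights matters, not their scale.

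\textbf{Inclusion probability for a shown paper.} Fix $j\in E_p$ and count the $k_p$-subsets of $E_p$ that contain $j$. There are $\binom{m-1}{k_p-1}$ of them, so
\[
r_{pj}=\binom{m-1}{k_p-1}\Big/\binom{m}{k_p}=\frac{k_p}{m}.
\]
A check by symmetry: $\sum_{j\in E_p} r_{pj}=\mathbb{E}|C_p|=k_p$, and the $m$ summands are equal. This value does not depend on which panel was drawn. So conditioning on the event $j\in E_p$ alone, rather than on the full panel, still gives $k_p/m$, by averaging a constant. I treat $k_p$ as given throughout, consistent with the null inheriting the realized budget.

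\textbf{Unconditional probability.} For a catalogue paper $j\in U_t$, write
\[
\Pr(j\in C_p)=\Pr(j\in E_p)\,\Pr(j\in C_p\mid j\in E_p),
\]
using $C_p\subseteq E_p$. Under the design the panel is a uniform $m$-subset of $U_t$. By the same counting, $\Pr(j\in E_p)=\binom{N_t-1}{m-1}/\binom{N_t}{m}=m/N_t=s_t$. The product is then $s_t\cdot k_p/m=k_p/N_t$.

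The only point needing care is this last step. It uses the design's independence of panel draw and budget, i.e.\ that $k_p$ is held fixed (or conditioned on) rather than varying with which papers were shown. I would state explicitly that the claim is conditional on $k_p$. If $k_p$ is random, it holds as $\mathbb{E}[k_p\mid j\in E_p]/N_t$, provided the realized budget does not depend on the identity of the shown papers. Everything else is elementary counting.
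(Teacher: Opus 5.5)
Your proof is correct and follows the paper's argument step for step: equal weights cancel in the choice law so every $k_p$-subset is equally likely, counting gives $k_p/m$, and multiplying by the exposure probability $m/N_t$ gives $k_p/N_t$. Your remark that the last step is conditional on the realized budget $k_p$ (or otherwise holds as $\mathbb{E}[k_p\mid j\in E_p]/N_t$) is a worthwhile clarification that the paper leaves implicit.
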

The null is not an external baseline. It is the same model on the same
panel with the same realized budget, with preference switched off, and
every excess in this paper therefore measures a model's distance from itself
under indifference. Operationally, we compute the null by uniform redraws
that inherit each prompt's panel and budget, which matches the protocol,
including the variation in $k_p$ across prompts.

\begin{assumption}[Conditional prompt independence]
\label{assump:indep}
Given the panels $\{E_p\}$ and the realized budgets $\{k_p\}$, the
bibliographies $\{C_p\}$ of distinct prompts are independent.
\end{assumption}
This is plausible by design, since every prompt is a separate call with no
shared context. It is the entire assumption budget for the accounting
identities and the exclusion bound; the rarity result of
Section~\ref{sec:recursion} adds a weighted fixed-budget choice law, i.i.d.
saliences from a common law, and a multiplicative class lift. The
concentration identities of Section~\ref{sec:round0} remain exact without
it, up to a covariance remainder derived in App.~\ref{app:proofs}. The
choice law itself is a modeling language rather than an assumption, and its
testable consequences are checked where they are used; the exchangeability
and orthogonality conditions that appear later are hypotheses under test,
not assumptions. Six formal statements carry the paper, each stated at its
point of use; App.~\ref{app:proofs} restates the entire system in logical
order, with proofs and a crosswalk.

\textbf{Random exposure and a binding budget.}
Every catalogue paper is equally likely to be seen, because panels are drawn
uniformly rather than ranked by a retriever, so concentration measured
against the null is selection and not visibility. The budget matters just as
much. In uncapped pilot runs the models cite 22--27 of the 30 shown papers
and are statistically indistinguishable from the null, so scarcity is what
turns a latent preference into competition among candidates, and it mirrors
the practical constraint that a written review can carry only so many
references.

\textbf{Compliance.}
We regenerated round-0 runs until every model produced cap-compliant
bibliographies. The Claude models could not sustain cap compliance across
recursion rounds and are excluded from it, since truncating an over-budget
bibliography would censor the choice we are trying to measure rather than
constrain it (App.~\ref{app:compliance}). The cross-vendor claims of
Section~\ref{sec:round0} therefore rest on round 0, where all three vendors
are present.

\textbf{Terminology used throughout.}
A \emph{panel} is the 30 papers a prompt is shown, and the \emph{budget}
$k_p$ is the number it validly cites, capped at 10. The \emph{null} replays
the same prompt with preference switched off, drawing $k_p$ papers uniformly
from the same panel, and every \emph{excess} here is a model's value minus
the null's on identical prompts. Writing $c_j$ for the citations paper $j$
receives and $n_j$ for its exposures, a selector's \emph{map} is the vector
over the 120 papers with $j$th entry $c_j/n_j$ minus the same ratio under the
null; the \emph{consensus map} averages that vector over the eleven models.
Concentration is reported as the top-decile share, the fraction of all
citations going to the twelve most-cited papers, and as the
Herfindahl--Hirschman index $\HHI = \sum_j \bigl(c_j / \sum_i c_i\bigr)^2$,
which equals $1/120 = 0.0083$ when citations spread evenly and rises as they
bunch. Correlations are Pearson correlations between two maps, taken across
the 120 papers. The 120 real papers are \emph{seeds}, papers the models write
during the recursion are \emph{generated}, and one pass of 120 prompts is a
\emph{round}, with round 0 the static benchmark and rounds 1--11 growing the
catalogue. App.~\ref{app:notation} lists every symbol used in the paper.


\section{Round 0: What the Models Do}
\label{sec:round0}

The static benchmark (round 0) puts eleven models from three vendors and
eight domain experts in front of the same blinded prompts, with
byte-identical panels, the experts completing a 53-prompt subset of the
models' 120. We ask first whether the models concentrate citation attention
where experts given identical stimuli show no detectable counterpart
(\S\ref{sec:collapse}), then whether the model deviations amount to one
shared map and how much of it mixing can remove (\S\ref{sec:onemap}), and
finally what that map is made of (\S\ref{sec:mapcontent}).

\begin{table}[t]
    \centering
    \caption{Round-0 results on the full 120-prompt corpus (the teaser's
    bars are the 53-prompt expert slice). Excess concentration and exclusion
    for all eleven models against the matched null (150 uniform redraws
    inheriting each prompt's panel and budget). Reliability is the
    split-half reliability of the map (prompts split at random into halves,
    the two half-maps correlated, Spearman--Brown corrected, averaged over
    60 splits; App.~\ref{app:mapest}); $\rho$ is the noise-corrected
    alignment ratio of Proposition~\ref{prop:map} (estimator in
    App.~\ref{app:rho}).}
    \label{tab:round0}
    \small
    \begin{tabular}{@{}lccccc@{}}
        \toprule
        Model & Top-10\% share (\%) & HHI & Never cited (of 120) & Reliability & $\rho$ \\
        \midrule
        Gemini 2.5 Pro & 30.2 & 0.0146 & 15 & 0.92 & 0.98 \\
        Gemini 3.1 Pro & 28.8 & 0.0157 & 18 & 0.95 & 0.91 \\
        GPT-5 mini & 28.5 & 0.0157 & 15 & 0.95 & 0.91 \\
        Claude Sonnet 4.6 & 28.3 & 0.0156 & 20 & 0.95 & 0.95 \\
        GPT-5 & 27.8 & 0.0154 & 19 & 0.95 & 0.89 \\
        Gemini 2.5 Flash & 27.5 & 0.0141 & 7 & 0.92 & 0.91 \\
        Claude Opus 4.8 & 27.1 & 0.0143 & 11 & 0.93 & 0.98 \\
        Gemini 3.1 Flash-Lite & 25.4 & 0.0124 & 3 & 0.88 & 0.95 \\
        GPT-4.1 & 25.2 & 0.0129 & 4 & 0.89 & 0.95 \\
        GPT-4.1 mini & 24.3 & 0.0127 & 5 & 0.89 & 0.89 \\
        Claude Haiku 4.5 & 23.3 & 0.0120 & 2 & 0.86 & 0.91 \\
        \midrule
        Matched null & 15.6 & 0.0091 & 0 & -- & -- \\
        \bottomrule
    \end{tabular}
\end{table}

\subsection{Concentration, and Who Does Not Show It}
\label{sec:collapse}\label{sec:humans}

The top decile of papers takes
23.3--30.2\% of all citations against 15.6\% for the null, $\HHI$ runs
0.0120--0.0157 against 0.0091, and between 2 and 20 of the 120 papers go
uncited across roughly 30 exposures apiece where the null leaves essentially
none untouched (per-model values in Table~\ref{tab:round0}). None of this is
fabrication, since the mean per-prompt hallucination rate reaches at most
1.2\% for any model and every hallucinated reference is removed before
analysis. Concentration also tracks capability, though only loosely. The
four lightest and oldest models sit nearest the null on all three metrics,
and both Gemini pairs order as capability would predict, but the top of the
table is mixed: GPT-5 mini exceeds GPT-5 on both concentration metrics, and
Claude Sonnet 4.6 exceeds Claude Opus 4.8 on all three.

\textbf{The experts.} Whether any competent reader of the same abstracts
would produce the same deviations is an empirical question, so we put it to
eight domain experts, who completed 53 of the 120 prompts through a web
interface seeing the panels the models saw, with the same 30 blinded records,
the same fabricated metadata, and the same cap of 10. Every human-model
comparison runs on that 53-prompt slice, with model statistics recomputed on
the identical slice and the null inheriting each realized budget. Annotation
effort concentrated in one principal annotator, which we return to below
(protocol in App.~\ref{app:protocol}). On the coarse statistics the
pooled experts sit at the null, with an 18.9\% top-decile share against
18.7\% and 2 papers uncited against 1.6 expected, while on the identical slice every
model sits far above on both margins (24.5--31.9\%; 8--26 exclusions), as
Figure~\ref{fig:teaser} showed.

The two populations were not doing quite the same job, however. The experts
selected papers, while the models wrote reviews with citations embedded in
prose, so the gap could in principle belong to the task rather than to the
selector. To settle that we reran all eleven models on the same 53 prompts in
selection-only mode, asking for a citation list and no prose, with panels,
blinding and cap unchanged. Concentration does not soften. Top-decile share
runs 22.9--34.5\% and exclusions 6 to 53, against 24.5--31.9\% and 8 to 26 in
review mode, and every model stays above both the expert value and its own
null of roughly 18.2\% (App.~\ref{app:selonly}). Whatever separates the two
populations, it is not that one of them had to write an argument.

\begin{wrapfigure}{r}{0.42\textwidth}
\vspace{-1.1\baselineskip}
\centering
\includegraphics[width=\linewidth]{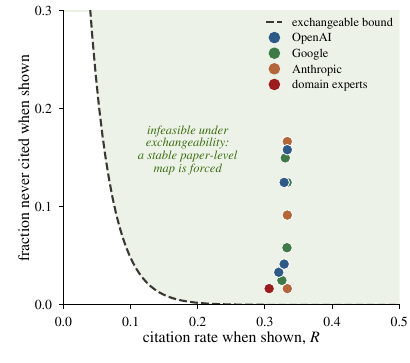}
\caption{\textbf{The same theorem, opposite verdicts.}
Proposition~\ref{prop:exch} on both populations: every model (round 0, all
120 prompts; bound drawn for that protocol) lies deep in the infeasible
region, while the pooled experts, tested at their own coverage
(App.~\ref{app:symmetry}), are consistent with exchangeability
($p = 0.42$). Colors give the vendor.}
\label{fig:symmetry}
\vspace{-2.2\baselineskip}
\end{wrapfigure}
The comparison can be sharpened into an exact test. Call a selector
\emph{exchangeable within a group} $A$ of papers if each prompt may carry
its own generosity but treats the group's papers symmetrically,
$r_{pj} = r^{A}_{p}$ for every $j \in A$; write $R_A$ for the group's
average citation rate when shown. Exchangeability allows each prompt its own generosity and forbids only stable
paper-level favoritism, and with each paper shown roughly thirty times that
absence all but guarantees every paper at least one citation.

\begin{proposition}[Exchangeability upper bound]
\label{prop:exch}
Under conditional prompt independence
(Assumption~\ref{assump:indep}) and exchangeability within $A$, the
probability that a paper in $A$ is never cited despite being shown
satisfies
\begin{equation}
    \Excl^{\mathrm{ex}}_{A}
    \;\le\;
    \frac{(1 - s R_A)^{M} - (1-s)^{M}}{1 - (1-s)^{M}},
    \label{eq:exch}
\end{equation}
with equality iff all $r^{A}_{p}$ are equal (Jensen on
Lemma~\ref{lem:excl}, the never-cited probability). At round~0, $s = 0.25$
and $M = 120$.
\end{proposition}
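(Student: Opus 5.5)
We first need the never-cited probability of Lemma~\ref{lem:excl}, which we derive here since its form drives everything. Fix a paper $j \in A$. Each prompt $p$ shows $j$ with probability $s = m/N_t$, independently across prompts by the uniform panel design. Given that $j$ is shown, prompt $p$ cites it with probability $r_{pj}$, and by Assumption~\ref{assump:indep} these citation events are independent across prompts given the panels and budgets. So in a single prompt the event ``shown and not cited'' has probability $s(1-r_{pj})$, and ``not shown'' has probability $1-s$. Multiplying over the $M$ prompts gives
\[
\Pr(\text{never cited}) = \prod_{p=1}^{M}\bigl(1 - s\,r_{pj}\bigr),
\qquad
\Pr(\text{never shown}) = (1-s)^{M}.
\]
The quantity $\Excl^{\mathrm{ex}}_{A}$ conditions on ``shown at least once''. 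It is therefore the ratio
\[
\frac{\prod_p (1 - s r_{pj}) - (1-s)^M}{1 - (1-s)^M},
\]
since ``never shown'' is contained in ``never cited''. This ratio is Lemma~\ref{lem:excl}.

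The one delicate point is treating $r_{pj}$ as a fixed per-prompt rate while panels are random. The clean route is to average over the panel and budget draw within prompt $p$. We then take $r_{pj}$ to mean the citation probability given that $j$ is shown. Independence of prompts, which is both the design and the assumption, lets the product factor.

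Under exchangeability within $A$, each $r_{pj}$ equals $r^A_p$, so the numerator becomes $\prod_p (1 - s r^A_p) - (1-s)^M$. Here we apply AM--GM, which is Jensen's inequality for the concave $\log$: the product of the nonnegative factors $1 - s r^A_p$ is at most $\bigl(\frac1M\sum_p (1 - s r^A_p)\bigr)^M = (1 - s\bar r)^M$. Now $\bar r = \frac1M\sum_p r^A_p$, and by exchangeability this equals the group average rate $R_A$. Equality in AM--GM holds iff all factors are equal, that is, iff all $r^A_p$ coincide, which gives the equality clause. The denominator is unchanged by this step and positive, so the inequality passes through the ratio.

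The main obstacle is the identification $\bar r = R_A$. $R_A$ is defined as the average citation rate \emph{when shown}, whereas $\bar r$ averages $r^A_p$ uniformly over prompts. These agree only if every prompt shows $A$-papers with the same probability $s$. That is exactly what the uniform panel design guarantees: the showing events are independent of $p$'s generosity, so the shown-weighted average equals the plain average in expectation. I would state $R_A$ as this design-expectation average and note that the empirical plug-in estimates it.

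Finally, substituting the round-0 design values $s = 30/120 = 0.25$ and $M = 120$ gives the stated numbers.
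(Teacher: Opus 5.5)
Your proposal is correct and follows the paper's proof. You derive Lemma~\ref{lem:excl} the same way, and your AM--GM step on the factors $1 - s r^A_p$ is the same inequality, with the same equality case, as the paper's Jensen step on the strictly concave $\log(1 - s r)$; your explicit identification of $R_A$ with the prompt average $\frac{1}{M}\sum_p r^A_p$ via uniform exposure is something the paper uses implicitly when it writes $\log(1 - s R_A)$ on the right-hand side of Jensen.
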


At the observed $R_A \approx 0.33$ that bound sits near $3\times10^{-5}$, and
every model's (rate, exclusion) pair lies deep inside the forbidden region
(Figure~\ref{fig:symmetry}), so a stable paper-level map is mathematically
forced for each of them. Because papers within a prompt compete for one
budget, we price the observed counts by redrawing each realized panel under
its own budget rather than with a binomial tail, which preserves that
dependence exactly. Across 4{,}000 draws the null exclusion count never
exceeds one while the models range from 2 to 20, so every model falls outside
the null's entire simulated support. The pooled human population, tested
identically, sits inside it (2 observed against a null mean of 1.6 and a
99.9th percentile of 6, $p = 0.48$), as it does under the analysis pipeline's
$\chi^2$ ($121.5$, df~119, $p = 0.42$). Non-rejection is not absence, so we
bound rather than assert. Matched on the same 53 prompts, panels and budgets,
expert preference strength carries a protocol-matched sensitivity bound of
7.5\% of the average model's, and the least concentrated single model exceeds that limit by a
factor of 7.6.

The pre-specified secondary outcomes point the same way. Human-model
bibliography overlap with each frontier model sits at chance, 2.96--3.25
common citations against roughly 3.06 expected with Fisher $p$ between 0.90
and 1.00, while the same models overlap one another at twice chance on the
same slice. The split-half reliability of the pooled human map is
indistinguishable from zero, $0.03 \pm 0.13$ against 0.86--0.95 for every
model (the Reliability column of Table~\ref{tab:round0}). And the raw stable
rank of the human prompt-level matrix, at 21, exceeds that of every model,
which run from 8.8 to 16.3, while its noise-corrected signal energy falls
under 1\% of raw against 5--11\% for the models. Within this benchmark's
resolution the experts are not carrying a broader map than the models.

What the models carry is therefore not relevance that any competent reader
would extract from the same abstracts, since readers given exactly those
abstracts extract at most a small fraction of it. The human map projects onto the
consensus direction with a coefficient of 0.12 and correlates with it at
0.19, so under 4\% of the variance is shared, an estimate that is noisy
by construction and directional only (App.~\ref{app:secondary}). Coverage and the dominance of one annotator bear
on that projection rather than on the exclusion test the conclusion rests on,
and dropping the dominant annotator leaves the picture unchanged
(App.~\ref{app:robustness}). Two further limits are worth keeping in view:
blinded selection is not citation practice in the wild, and eight readers are
plural in a way that a single model is not, which is the comparison a claim
about shared preference requires. The deviations are large, they are stable,
and the experts do not reproduce them. Whether they are one thing is the next
question.

\subsection{One Map, and What Mixing Can Do to It}
\label{sec:onemap}

\begin{wrapfigure}{r}{0.60\textwidth}
\vspace{-1.1\baselineskip}
\centering
\includegraphics[width=\linewidth]{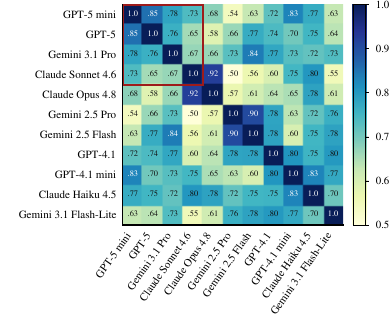}
\caption{\textbf{One map.} Correlations of the eleven maps, frontier tier
boxed. The mean off-diagonal correlation is 0.65 uncorrected and 0.71 after
correction for estimation noise, and a single principal component carries
68\% to 73\% of the cross-model variance on the same two conventions; the
second carries under 10\%.}
\label{fig:onemap}
\vspace{-0.9\baselineskip}
\end{wrapfigure}
Estimating each model's map nonparametrically and correcting it for
estimation noise with split-half reliabilities (App.~\ref{app:mapest}), the
mean correlation between the eleven maps comes to 0.65 uncorrected and 0.71
corrected, with a single principal component carrying 68\% to 73\% of
cross-model variance on the same two conventions (Figure~\ref{fig:onemap}).
We report both ends because the correction estimates within-prompt sampling
noise from pooled per-paper rates, which also carry between-prompt variation
in propensity, so it removes somewhat more than noise and the corrected value
is an upper bound. Nothing below turns on which end of that interval is used.
Vendor boundaries barely matter either, with within-vendor correlations of
0.77 for GPT, 0.79 for Gemini and 0.83 for Claude only modestly exceeding the
cross-vendor 0.70, 0.70 and 0.63. Eleven deviations are not eleven
idiosyncrasies.

The agreement shows up decision by decision. Of the shown candidates, 46\%
are cited by at most 2 of the 11 models, 2.5\% draw unanimous citation, which
has probability $5\times10^{-6}$ under independent selectors with the same
marginals, and 19\% are cited by none. Plugging the estimated per-paper
rates into the choice law reproduces this two-ended vote histogram at
total-variation distance 0.11, against 0.43 for independent selectors
(figure and construction in App.~\ref{app:votehist}). Orthogonal per-model
maps would fill the middle; shared ones pile mass at the ends, and they do.

The same conclusion holds within each model, in the theorem's own
vocabulary. For prompt $p$, let $d_p \in \mathbb{R}^{N_t}$ collect
$\delta_{pj} = s\,(r_{pj} - k_p/m)$ over the catalogue, that is, each
paper's citation probability under prompt $p$ minus the null's, the
prompt's own deviation from indifference. Its natural estimate puts
$\hat r_{pj} - k_p/m$ on the shown panel and zero elsewhere
(App.~\ref{app:rho}); the map of Section~\ref{sec:benchmark} is the
average of these vectors up to the common exposure factor $s$, and the
theorem asks how much more the $M$ of them have in common.

\begin{proposition}[Shared map and preference spectrum]
\label{prop:map}
Let $D \in \mathbb{R}^{M \times N_t}$ stack one selector's deviation
vectors $d_p$, with mean row $\bar{\boldsymbol\delta}$ and idiosyncratic part
$B = D - \mathbf{1}\bar{\boldsymbol\delta}^{\!\top}$, and let $K$ be the
total citation count. Under conditional prompt independence
(Assumption~\ref{assump:indep}),
\begin{equation}
    \mathbb{E}[\HHI] - \mathbb{E}[\HHI^{\mathrm{null}}]
    \;=\;
    \frac{M(M-1)\,\|\bar{\boldsymbol\delta}\|_2^2 - \|B\|_F^2}{K^2}
    \;=\;
    \frac{\|D\|_F^2}{K^2}\,\bigl(M\rho - 1\bigr),
    \label{eq:sharedmap}
\end{equation}
where $\rho$ is the fraction of preference strength on the shared direction and
$r_s$ the stable rank, with $\rho \le 1/r_s(D)$.
\end{proposition}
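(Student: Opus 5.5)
We expand $\HHI$ into per-paper second moments, use conditional prompt independence (Assumption~\ref{assump:indep}) to factor the cross-prompt terms, and subtract the null of Proposition~\ref{prop:null} term by term. We condition throughout on the realized budgets $\{k_p\}$. Then $K=\sum_p k_p$ is fixed, and $\HHI = K^{-2}\sum_j c_j^2$ is a quadratic form in the counts.

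Write $c_j=\sum_p X_{pj}$ with $X_{pj}=\mathbf 1\{j\in C_p\}$. Averaging over the uniform panel draw gives $a_{pj}:=\Pr(X_{pj}=1)=s\,r_{pj}$ and, under the null, $b_{pj}:=s\,k_p/m$, so that $\delta_{pj}=a_{pj}-b_{pj}$. Expanding the square,
\[
\mathbb E[c_j^2]=\sum_p a_{pj}+\sum_{p\neq q}\mathbb E[X_{pj}X_{qj}],
\]
and by Assumption~\ref{assump:indep} the cross term factors as $a_{pj}a_{qj}$. The same holds for the null with $b$ in place of $a$.

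\textbf{Diagonal terms.} After summing over $j$, the diagonal terms give $\sum_p\sum_j a_{pj}=\sum_p k_p=K$ for both selector and null, because each bibliography has exactly $k_p$ entries. They therefore cancel in the difference.

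\textbf{Cross terms.} Write $a_{pj}a_{qj}-b_{pj}b_{qj}=\delta_{pj}\delta_{qj}+b_{pj}\delta_{qj}+\delta_{pj}b_{qj}$. The two mixed terms vanish after summing over the catalogue: $b_{pj}=s k_p/m$ does not depend on $j$, and
\[
\sum_j\delta_{qj}=\sum_j a_{qj}-N_t\,s\,k_q/m=k_q-k_q=0,
\]
since $N_t s=m$. What remains is
\[
K^2\bigl(\mathbb E[\HHI]-\mathbb E[\HHI^{\rm null}]\bigr)=\sum_{p\neq q}\langle d_p,d_q\rangle=\Bigl\|\sum_p d_p\Bigr\|_2^2-\|D\|_F^2=M^2\|\bar{\boldsymbol\delta}\|_2^2-\|D\|_F^2 .
\]
The Pythagorean split $\|D\|_F^2=M\|\bar{\boldsymbol\delta}\|_2^2+\|B\|_F^2$ holds because the rows of $B$ sum to zero. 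Substituting it gives the first form $M(M-1)\|\bar{\boldsymbol\delta}\|^2-\|B\|_F^2$.

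\textbf{The $\rho$ form and the stable-rank bound.} Define $\rho:=M\|\bar{\boldsymbol\delta}\|^2/\|D\|_F^2$. This equals $\|D^\top u\|^2/\|D\|_F^2$ with the unit vector $u=\mathbf 1/\sqrt M$, so it is the share of Frobenius energy along the mean direction. Then $M^2\|\bar{\boldsymbol\delta}\|^2=M\rho\|D\|_F^2$, which yields the factor $(M\rho-1)$. The Rayleigh bound $\|D^\top u\|^2\le\sigma_1(D)^2$ gives
\[
\rho\le\sigma_1^2/\|D\|_F^2=1/r_s(D).
\]

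\textbf{Main obstacle.} The delicate step is the conditioning bookkeeping, not the algebra. The vector $d_p$ must be defined as panel-averaged inclusion probabilities given $k_p$, so that $a_{pj}=s\,r_{pj}$ is exact. Independence must be invoked conditionally on the budgets, which may depend on the panel through hallucination removal.

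If $K$ is instead treated as random, $\mathbb E[\sum c_j^2/K^2]$ no longer factors. I would then condition on $\{k_p\}$ as the paper's null does, and defer the residual to the covariance remainder the paper attributes to App.~\ref{app:proofs}.

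The cancellation $\sum_j\delta_{qj}=0$ also needs care. It uses the uniform design, through $N_t s=m$, together with the requirement that the null inherit each realized $k_p$. Without that matching, the mixed terms do not vanish.
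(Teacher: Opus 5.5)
Your proof is correct and follows the paper's route. The paper first isolates your expansion of $\mathbb{E}[c_j^2]$ as the concentration identity (Lemma~\ref{lem:conc}), where budget neutrality $\sum_j \delta_{pj}=0$ kills the diagonal and mixed terms, and then does exactly your algebra: $\|D^{\top}\mathbf{1}\|_2^2=\|D\|_F^2+2\sum_{p<q}\langle d_p,d_q\rangle$, the split $\|D\|_F^2=M\|\bar{\boldsymbol\delta}\|_2^2+\|B\|_F^2$, and $\|D^{\top}\mathbf{1}\|_2\le\sqrt{M}\,\|D\|_2$ for $\rho\le 1/r_s(D)$, with your explicit conditioning on the realized budgets (fixed $K$) being the same idealization the paper makes implicitly.
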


In words, a selector's excess concentration is its preference strength times
its alignment. One factor measures how much preference the prompts carry in
total, the other how much of it points in a single shared direction, and the
stable rank tells us whether it does.

After noise correction, $\rho$ runs 0.89--0.98 for every model against
$1/M = 0.008$ for unaligned prompts, and the corrected spectral ratio, the
sample counterpart of the stable rank, runs 0.85--1.06, indistinguishable
from one (App.~\ref{app:rho}), so the data attain the bound rather than
merely satisfy it, with the caveat noted there that the corrected matrix is
not guaranteed positive semidefinite. A model's 120 prompts do not carry
120 preference maps. They carry one paper-level map, applied nearly
identically across panels and across the two prompt formulations.

The identity also serves as an internal check. Reconstructing each model's
excess HHI from $(\|D\|_F^2, \rho)$ via Eq.~\eqref{eq:sharedmap}
reproduces the observed excess to within the null-redraw error, at 1.6\% mean relative deviation, for all eleven models
(App.~\ref{app:rho}). The same excess can also be read directly off pairs of bibliographies: by
Lemmas~\ref{lem:conc}--\ref{lem:overlap},
\begin{equation}
\mathbb{E}[\HHI] - \mathbb{E}[\HHI^{\mathrm{null}}]
\;=\;
\frac{2}{K^2}\sum_{p<q}\Bigl(\mathbb{E}\,|C_p \cap C_q| \;-\; \frac{k_p k_q}{N_t}\Bigr),
\label{eq:overlapmain}
\end{equation}
so excess concentration and excess pairwise bibliography overlap are the
same quantity, and the floor below is measured in the same units.

\textbf{What mixing can remove.} Which mitigation law governs depends on the
geometry. Were family maps pairwise orthogonal, excess concentration would
fall as $1/L$ in the number of families $L$, the orthogonal ideal of
App.~\ref{app:proofs}. At cross-vendor correlation near 0.70 that premise
fails decisively, so the operative law is the same identity applied across
families.

\begin{corollary}[Plurality with a shared floor]
\label{cor:floor}
If the family maps are $d_\ell = \boldsymbol\delta + \boldsymbol\varepsilon_\ell$
with residuals pairwise orthogonal, orthogonal to $\boldsymbol\delta$, and
of common energy $\varepsilon^2$, then a uniform mixture over $L$ families
satisfies
\begin{equation}
    \mathbb{E}[\HHI] - \mathbb{E}[\HHI^{\mathrm{null}}]
    \;=\;
    \underbrace{\frac{M(M-1)\,\|\boldsymbol\delta\|_2^2}{K^2}}_{\text{shared floor}}
    \;+\;
    \underbrace{\frac{M\bigl(\tfrac{M}{L}-1\bigr)\,\varepsilon^2}{K^2}}_{\text{model mixing-removable}}.
    \label{eq:floor}
\end{equation}
\end{corollary}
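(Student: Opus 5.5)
The plan is to obtain Corollary~\ref{cor:floor} as a direct specialization of the first equality in Proposition~\ref{prop:map}, applied to the deviation matrix of the mixed selector. I read the uniform mixture as an assignment of the $M$ prompts to the $L$ families in equal blocks of $M/L$. I also idealize so that every prompt served by family $\ell$ has deviation vector $d_p = d_\ell = \boldsymbol\delta + \boldsymbol\varepsilon_\ell$; this is the same idealization that makes $\rho\approx 1$ within a single model. The panels, the budgets and hence $K$ are shared across families by design, so the denominator $K^2$ is untouched. Conditional prompt independence (Assumption~\ref{assump:indep}) holds for the mixture, because each prompt is still a separate call. Hence Eq.~\eqref{eq:sharedmap} applies verbatim to the stacked matrix $D$ with these rows.

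The computation has three pieces, which I will carry out in order using only the orthogonality hypotheses.
\begin{itemize}
\item \textbf{Mean row.} It is $\bar{\boldsymbol\delta} = \boldsymbol\delta + \tfrac1L\sum_\ell \boldsymbol\varepsilon_\ell$. The residuals are pairwise orthogonal, orthogonal to $\boldsymbol\delta$, and of common energy $\varepsilon^2$, so $\|\bar{\boldsymbol\delta}\|_2^2 = \|\boldsymbol\delta\|_2^2 + \varepsilon^2/L$.
\item \textbf{Total energy.} Each row has squared norm $\|\boldsymbol\delta\|_2^2+\varepsilon^2$, so $\|D\|_F^2 = M(\|\boldsymbol\delta\|_2^2+\varepsilon^2)$.
\item \textbf{Idiosyncratic part.} Using the Pythagorean split $\|D\|_F^2 = M\|\bar{\boldsymbol\delta}\|_2^2 + \|B\|_F^2$, which holds because $B$ has zero column means, I get $\|B\|_F^2 = M\varepsilon^2(1-1/L)$.
\end{itemize}

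Substituting into $M(M-1)\|\bar{\boldsymbol\delta}\|_2^2 - \|B\|_F^2$ gives
\[
M(M-1)\|\boldsymbol\delta\|_2^2 + M\varepsilon^2\Bigl[\tfrac{M-1}{L} - 1 + \tfrac1L\Bigr] = M(M-1)\|\boldsymbol\delta\|_2^2 + M\Bigl(\tfrac{M}{L}-1\Bigr)\varepsilon^2 .
\]
Dividing by $K^2$ then yields Eq.~\eqref{eq:floor}. The first term is independent of $L$, which is the shared floor. The second term decays like $1/L$ toward $-M\varepsilon^2/K^2$, which is the mixing-removable part.

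The algebra is routine. The step I expect to need the most care is justifying the reduction itself: that a mixture's per-prompt deviation vectors may be taken to equal the family maps $d_\ell$. Two points need attention.
\begin{itemize}
\item \textbf{Deterministic assignment.} I would first try a balanced deterministic assignment of prompts to families. Then Proposition~\ref{prop:map} applies conditionally on that assignment with no extra variance term.
\item \textbf{Random assignment.} If instead families are drawn at random per prompt, the cross terms $\boldsymbol\varepsilon_\ell^\top\boldsymbol\varepsilon_{\ell'}$ still vanish by orthogonality. In that case $M/L$ is replaced by the expected block size, and I would note this in a remark.
\end{itemize}
A second point is that $d_\ell$ has nonzero entries only on the shown panel. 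So I will state orthogonality for the population maps, meaning averages over panels, and rely on the design's exchangeability across prompts to transfer it to the rows. I flag this as an idealization rather than prove it.
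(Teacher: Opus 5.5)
Your proof is correct, but you organize the computation differently from the paper.

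The paper works directly from Lemma~\ref{lem:conc}:
\begin{itemize}
\item a pair of prompts in the same family contributes $\|\boldsymbol\delta\|_2^2+\varepsilon^2$;
\item a pair in different families contributes $\|\boldsymbol\delta\|_2^2$;
\item counting the $L\binom{M/L}{2}$ within-family pairs among all $\binom{M}{2}$ gives both terms in one line.
\end{itemize}
That route is the shortest, and it makes it immediate that setting $\boldsymbol\delta=0$ recovers Corollary~\ref{cor:orth}.

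You instead evaluate the first equality of Proposition~\ref{prop:map}, which is Lemma~\ref{lem:conc} rewritten through $\|D^{\top}\mathbf{1}_M\|_2^2=\|D\|_F^2+2\sum_{p<q}\langle d_p,d_q\rangle$. You compute $\|\bar{\boldsymbol\delta}\|_2^2=\|\boldsymbol\delta\|_2^2+\varepsilon^2/L$ and $\|B\|_F^2=M\varepsilon^2(1-1/L)$, and your algebra is right. What your bookkeeping buys is a view of where the residual survives averaging. The mixture's own mean row is $\boldsymbol\delta+\tfrac1L\sum_\ell\boldsymbol\varepsilon_\ell$, not $\boldsymbol\delta$. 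This matches the paper's remark that the uniform-mixture residual still carries within-family residual alongside the floor.

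Your worry about panel support is unnecessary. The rows of $D$ are the population deviation vectors of Eq.~\eqref{eq:deviation}, defined over the whole catalogue after marginalizing over panels. Only the estimator rows of App.~\ref{app:rho} vanish off the shown panel, so the hypotheses apply to the $d_\ell$ exactly as stated and no transfer step is needed.

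Two of your asides are wrong, though neither affects the main argument.
\begin{itemize}
\item \textbf{Random assignment.} If families are drawn i.i.d.\ per prompt, the within-family count is $\sum_\ell n_\ell(n_\ell-1)$ with multinomial block sizes $n_\ell$. Its expectation is $M(M-1)/L$, so the removable term becomes $M(M-1)\varepsilon^2/(LK^2)$. This exceeds the balanced value $M(M/L-1)\varepsilon^2/K^2$ by $M(1-1/L)\varepsilon^2/K^2$, because block-size variance matters. Replacing $M/L$ by the expected block size changes nothing, since that expectation is exactly $M/L$, and so it gives the wrong answer.
\item \textbf{The limit.} With balanced blocks, $L$ divides $M$, so the removable term decreases to $0$ at $L=M$. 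The limit $-M\varepsilon^2/K^2$ lies outside the admissible range.
\end{itemize}
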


In words, mixing averages away only what the models do not share. The common
map passes through every mixture untouched, so mixing can pay down the
idiosyncratic part of the excess and nothing more.

The average single model carries 0.496 excess pairwise overlap and the
eleven-model, three-vendor mixture still carries 0.356, so uniform mixing
removes 28\% of the excess. What remains is the uniform-mixture residual
rather than the shared floor $\|\boldsymbol\delta\|_2^2$ of
Corollary~\ref{cor:floor}, since at finite $L$ it still contains within-family
residual alongside the shared term, which is why an optimized mixture can go
below it. The corollary's floor is derived for uniform weights rather than
for the best ones, so we solve for the optimum directly. Minimizing a
mixture's excess overlap over the simplex of convex weights leaves 52.4\% of
the average single model's excess against 71.7\% for the uniform mixture.
Because those weights are chosen on the same prompts they are scored on, we
also cross-fit them: optimizing on a random half of the prompts and
evaluating on the other half, over 80 half-splits, gives 54.6\% (95\% CI
48.9--60.6\%). The optimism is 2.2 points, so more than half the excess
survives the best mixture whether or not the weights are honest.

The optimum is not diversification. It puts all its weight on the three least
concentrating models and none on the other eight, and the best single model
alone already reaches 59.7\%, so mixing buys 7.3 points beyond simply
choosing that model. Plurality is therefore worth less than the uniform
comparison implies, the attainable floor is lower than it implies, and more
than half the excess survives the best mixture this population admits. What
the floor is not has already been settled in Section~\ref{sec:collapse},
namely relevance that experts would extract; what remains is to say what it
is.

\begin{figure}[t]
    \centering
    \includegraphics[width=\textwidth]{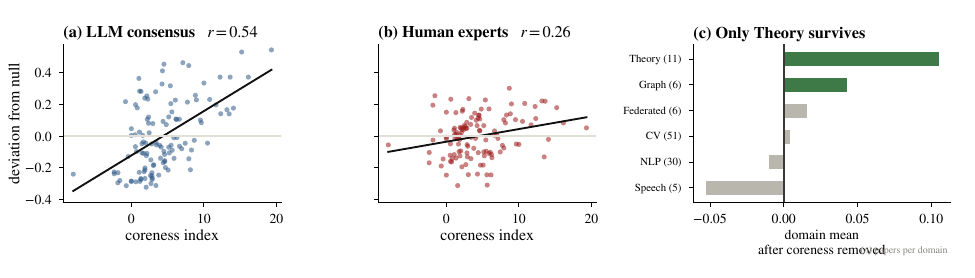}
    \caption{\textbf{A methodological-centrality prior experts barely
    share.} (a)~Consensus deviation against the coreness index, $r = 0.54$;
    (b)~the human map against the same index, $r = 0.26$, matched axes.
    (c)~Domain means after removing the coreness component; Theory and Graph
    keep positive tilts, on 11 and 6 papers.}
    \label{fig:mapcontent}
\end{figure}

\section{What the Map Is Made Of}
\label{sec:mapcontent}

\subsection{What the Map Tracks}
\label{sec:maptracks}

A cross-validated model on title and abstract text predicts a held-out
paper's consensus deviation at a correlation of 0.47 out of sample; since the
displayed
author names and years are fabricated, the map is a stable function of what the abstract
says, not noise and only marginally the displayed author and year fields,
whose contribution Section~\ref{sec:mapistext} quantifies
(App.~\ref{app:mapchar}); whether a model recognizes a paper from its text is
taken up in Section~\ref{sec:mapistext}. Two post-hoc probes indicate what
that text function tracks. A methodological-coreness keyword index correlates
0.54 with the consensus map and only 0.26 with the human map, and the
extremes are concrete, with ``Distilling Knowledge by Mimicking Features'' at
the favored end and a 3D object-detection paper at the shunned end
(Figure~\ref{fig:mapcontent}a,b, a steep line against a flat cloud on matched
axes). Coreness alone explains $R^2 = 0.29$ of the consensus map, domain
labels 0.17, and both together 0.40; once coreness is controlled for, Theory
and Graph retain positive tilts and Speech a negative one, on domain counts
of 11, 6 and 5 papers respectively (Figure~\ref{fig:mapcontent}c). Both
probes are correlates rather than the estimate the section rests on, which is
the cross-validated text model. Taken together they describe a
methodological-centrality prior that favors core-method and theoretical work
and starves applied work, applied far more sharply than the experts apply the
same tilt.

\subsection{Whether It Is the Text}
\label{sec:mapistext}

\begin{wrapfigure}{r}{0.42\textwidth}
\vspace{-1.0\baselineskip}
\centering
\includegraphics[width=\linewidth]{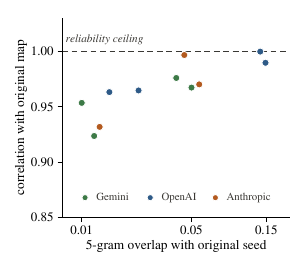}
\caption{\textbf{The map survives paraphrase.} GPT-5 mini rerun on eleven
paraphrased versions of the seed corpus; each point is one paraphraser,
placed by its residual 5-gram overlap with the original text (log scale)
against the noise-corrected correlation of its map with the original-seed
map. The correlation stays at the ceiling across a fourteen-fold range of
surface retention; the two points near 0.92 are the arms whose paraphrases
shifted meaning (App.~\ref{app:paraphrase}).}
\label{fig:paraphrase}
\vspace{-0.8\baselineskip}
\end{wrapfigure}

Predictability from text leaves the sharpest objection standing. The 120
seeds are real, public papers; every model may have seen them, and their
real citation records, during training, so a text-predictable map could
still be a memorized citation consensus using the abstract as a lookup
key. The benchmark's blinding cannot answer this, because the one field it
must keep real, the abstract, is exactly the field a memory would match
on. So we attack the surface directly. Each of the eleven models rewrote the
title and abstract of all 120 seeds, and GPT-5 mini then replayed round 0
against every rewritten corpus, with panels, prompts, and randomization
byte-identical to the original run. A valid paraphrase has to do two opposing
things at once, destroying the wording a memory could key on, with mean
5-gram overlap against the original at most 0.05, while preserving the
content the map is supposed to read, with coreness correlation above 0.95 and
length within 15\%. Four paraphrasers pass both tests
cleanly; App.~\ref{app:paraphrase} reports all eleven.

The map does not move. Against the four valid rewrites, the noise-corrected
correlation between the paraphrased-seed map and the original-seed map is
0.96--1.00, at the reliability ceiling, and 0.92--1.00 across all eleven
arms, spanning a fourteen-fold range of residual surface overlap
(Figure~\ref{fig:paraphrase}); the four clean rewrites, produced by four
models from three vendors, also agree with one another at 0.95--0.99, so the
invariant is the seeds' content rather than any rewriter's style. The
collapse itself is equally unmoved, with top-decile shares of
27.9--28.9\% against 28.5\% on the original corpus. And the two arms that
do move the map, the pair near 0.92, are exactly the two whose rewrites
shifted meaning rather than wording, which functions as a positive control.
Perturb the content and the map responds; destroy the surface and it does
not. Whatever the map reads, it reads through meaning. Paraphrase
alone cannot exclude a model recognizing a paper semantically, but that
residual channel must itself operate through content, and
Section~\ref{sec:recursion} shows the collapse on generated papers that no
training corpus contains.

Paraphrase holds the sampling design fixed, which leaves a second family of
explanations standing. Five things reach the model for each record: the
surface wording, the meaning, the fabricated surname and assigned year, the
panel it appears in and therefore the competitors it faces, and its position
in that list. One random seed bound all five to the record together, so any
of the four that paraphrase preserves could be producing the stable
per-record rate. Three further experiments separate them, all at round 0 and
all measured against the same matched null (App.~\ref{app:design}).

\textbf{Separating the five.} The first experiment moves the text and leaves the design alone.
Six shuffles in which no record keeps its own text, formally derangements,
relocate every title and abstract to a different record, which keeps its own
surname, year, panel memberships and positions; before spending we verified
that membership, order and every assigned surname and year were
byte-identical to the reference run, with title and abstract text the only
difference in the rendered prompts. Writing $d^0$ for the reference map and
$d^A_a$ for arm $a$ under permutation $\pi_a$, we regress $d^A_a[i]$ on
$d^0[\pi_a(i)]$, the propensity of the content that arrived, against
$d^0[i]$, the propensity of the slot that stayed. The two regressors are the
same vector under different index orderings, so they carry equal variance
and are near-orthogonal for a random derangement, and their coefficients
read as variance shares. At 480 prompts per arm, about 120 exposures per
record, the content share is 0.97 (95\% CI 0.92--1.03). Citation propensity
travels with the abstract rather than with the record it was attached to,
and this conclusion requires no claim that we have correctly enumerated what
the model sees.

The second experiment inverts it. The generator assigning panel membership,
within-panel order, metadata and the task split was separated into four
independent streams, and six replicates move all four at once while the
corpus is untouched, leaving the text as the only property that persists
across replicates. The six maps agree at a mean pairwise correlation of
$\bar r = 0.894$ (record bootstrap 0.863--0.919) against a generation-noise
ceiling of 0.973 measured from repeated generations on identical panels.
Because $\bar r$ is the correlation between two maps sharing only their
text, it is itself the content share of observed map variance.

The third measures each remaining factor on its own. Against a ceiling of
0.993 at 480 prompts, redrawing metadata alone decorrelates the map to
0.938, while redrawing within-panel order alone leaves it at 0.989.
Converting each gap to a variance share and combining with the panel
contribution recovered from split-half gives Table~\ref{tab:channels}. The
non-content factors there sum to the shortfall of $\bar r$ below the
generation-noise ceiling, so the accounting closes.

\begin{table}[t]
\centering
\caption{\textbf{Composition of the round-0 map}, GPT-5 mini, as shares of
its observed variance at two designs. Metadata and order are measured
directly by redrawing one stream at a time; panel is recovered from
split-half, generation from repeated generations on identical panels.
Each measured gap is converted to a noise-to-signal ratio before being
placed in a column, so that a factor measured at one exposure count can be
stated at another. Sampling factors shrink with exposures per record; the
metadata factor does not. Columns sum to 100 up to rounding.}
\label{tab:channels}
\begin{tabular}{lrrl}
\toprule
factor & 120 prompts & 480 prompts & kind \\
\midrule
content             & 89.4\% & 93.1\% & \\
fabricated metadata & 5.3\%  & 5.5\%  & structural \\
generation noise    & 2.5\%  & 0.6\%  & sampling \\
within-panel order  & 1.5\%  & 0.4\%  & sampling \\
task assignment     & 1.1\%  & 0.3\%  & sampling \\
panel composition   & 0.3\%  & 0.1\%  & sampling \\
\bottomrule
\end{tabular}
\end{table}

The factors are of two kinds. Panel, order, task and generation are sampling
artifacts, and quadrupling the exposures per record shrinks each by roughly a
factor of four, as it should; within-panel order is the largest of them and
panel composition the smallest, at 1.5\% and 0.3\% of map variance. Metadata
does not move, because a surname is fixed to a record within a run and no
amount of averaging removes it, which makes it the only non-content factor
that survives arbitrary power and, at the benchmark's design, about equal to
all the sampling factors taken together. What drives it is not the assigned
year, which shifts a record by 0.36 map standard deviations across the full
2017--2022 range and explains under 2\% of map variance on its own. The
majority of the channel is carried by the fabricated surname, a string drawn
at random that says nothing whatever about the paper it labels. Surname identity is significant on the map after
record and replicate means are removed ($F = 2.34$, $p < 0.001$, over the 63
surnames recurring across replicates), while no coarse feature accounts for
it: string length, initial letter and patronymic prefix are all null.

\textbf{A second model.} The crossover bounds the same factor independently
rather than estimating
it. Its slot coefficient, which absorbs everything fixed to the record other
than the text, averages 0.033 with a 95\% interval of $-0.001$ to 0.068
across arms, and no individual arm's coefficient differs from zero. Nothing
detectable travels with the slot, and the metadata arm's 0.055 sits inside
that bound. We repeated the crossover and the metadata arm on GPT-4.1 mini against
byte-identical designs, reading each model against its own ceiling since the
two differ by a factor of 3.4 in variance-share units. The content share
replicates, at 0.98 against 0.96, each inside the other's interval. The
metadata channel does not: 5.8\% against 2.6\%, a difference twice the
resolution threshold we fixed in advance, with the stronger model carrying
the larger name effect. Both routes recover that ordering, at ratios of 2.2
and 2.0. Two models do not establish a trend, and we report the ordering
rather than a scaling claim.

The concentration itself is unmoved throughout. Across the six resampled
designs the top-decile share runs 27.3--30.2\% against 28.5\% on the
original and top-decile excess over the matched null runs 0.117--0.145
against 0.129, with each replicate's null recomputed against its own panels.
Concentration is a property of the population of designs rather than of one
realized draw.

Where the content function comes from remains genuinely open: an absorbed
citation consensus that survives paraphrase because it was learned at the
level of meaning, or a convergent judgment of how a citable paper reads.
Every result in this paper holds under either, and
Section~\ref{sec:discussion} takes up what still separates them. Round 0
thus establishes a single, stable, content-driven centrality prior that
experts do not share and that does not depend on the words it arrives in;
the next section establishes what repeated use does with it under
scarcity.

\section{Recursion: A Fixed Preference Under Dilution}
\label{sec:recursion}\label{sec:persists}\label{sec:funnel}\label{sec:competition}\label{sec:capability}

Round 0 measured the map in a static world, a fixed corpus of real papers
judged once. That is not how the technology will be used. We therefore run
the benchmark recursively with eight models (four OpenAI, four Gemini;
Section~\ref{sec:benchmark} explains the Claude exclusion). Each round's
120 generated papers join the next round's catalogue, panels stay uniformly
random, and by round 11 the original papers are outnumbered roughly ten to
one. Being cited never raises a paper's chance of being shown again, of
surviving, or of having successors, so nothing here is a citation feedback
loop in the selection sense; what recursion supplies is dilution.
Hallucination stays marginal throughout, at a mean per-prompt rate of at most
1.4\% for any model pooled over rounds (App.~\ref{app:compliance}). The
question is what a fixed, shared map does to a literature that increasingly
consists of its own outputs.

As the catalogue grows under a fixed citation budget, even an indifferent selector
concentrates. Fewer of the papers it has ever seen can appear in any one
bibliography, so the null's top-decile share climbs from its round-0 level to 29.1\% by
round 11. Part of any concentration trajectory is therefore mechanical, and
the models must be judged against the moving bar, not against round 0.

The models clear the moving bar at every round. Round-11 top-decile shares
run 31.1--40.3\%
(Figure~\ref{fig:recursion}a), 39--49\% of shown papers are never cited by
round 11 against 38.0\% for the null, and excess bibliography overlap, the
quantity Lemma~\ref{lem:overlap} ties exactly to concentration, stays
positive at every round for every model. Distributional texture (Gini,
openness to new work, elite churn, top-1\% share) is reported in
App.~\ref{app:diagnostics}; none of the paper's conclusions rest on it.

A skeptic's first alternative is that the persistence is somehow about the
real papers themselves, their fame, their style, their age. If so, deleting
them from the ledger should delete the effect. Figure~\ref{fig:recursion}b
recomputes the excess over generated papers only, so that no property of any
real paper can contribute, and the excess survives in all eight models,
1.7--11.2 pp above the null at round 11. The excess therefore remains when
the statistic is restricted to generated records, though those records may
carry content descended from the seed panels they were written from.

Whether the \emph{shared} map survives is a separate question, and the
answer is largely no. Pooling every round, the cross-model correlation of
per-paper citation rates is 0.68 among seeds and 0.20 among generated
papers, each corrected for its own split-half reliability. Generated papers
still carry real per-model dispersion, about 61\% of the seeds' once the
sampling component is removed, so the gap is not an absence of signal and
every model does hold a stable preference over them. The models simply do not share it.
Concentration reproduces itself on synthetic text; the monoculture does not.
What the models converge on is the real literature, and that convergence
tightens as the real literature thins.

\begin{figure}[t]
    \centering
    \includegraphics[width=\textwidth]{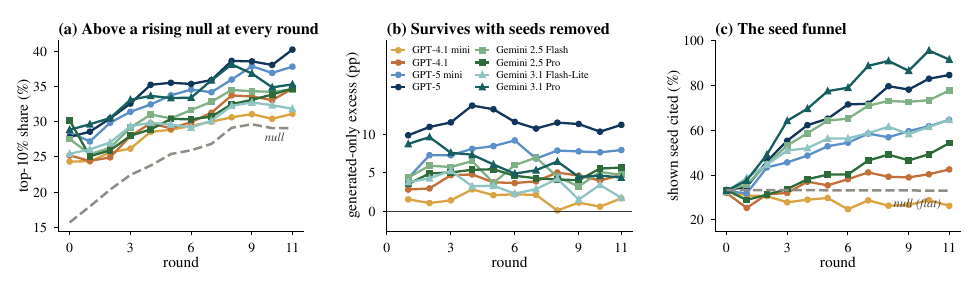}
    \caption{\textbf{Concentration under dilution} (8 models, 12 rounds).
    (a)~Top-decile citation share against the null, which itself rises as
    the catalogue grows; every model stays above it at every round. (b)~The
    excess computed over generated papers only: concentration survives with
    every real paper removed, though the cross-model agreement behind it
    does not. (c)~The seed funnel: the fraction of shown seeds cited climbs
    from 33\% toward 63\% on average (92\% for the strongest model) while
    the null's seed rate stays flat near 32\%. Panel (c) is a conditional
    rate; absolute seed attention falls over the same rounds.}
    \label{fig:recursion}
\end{figure}

\textbf{The funnel.}
As seeds become rare, the models cite them harder. The citation rate when
shown of a seed paper roughly
doubles, from 33\% at round 0 to 63\% on average at round 11, and reaches
92\% for the strongest model, which by the end cites more than nine of every
ten real papers it is shown; generated papers, meanwhile, are cited at
27--34\%, essentially the uniform rate, at every round
(Figure~\ref{fig:recursion}c). The obvious objection is that the rise is arithmetic rather than behavior,
since seeds are becoming rare and perhaps \emph{any} selector would
concentrate on them. The null answers this objection. Inheriting the same
panels and budgets, the indifferent selector's seed rate is flat near
32\% across all twelve rounds, because uniform choice
does not care which papers are rare. Everything above that flat line is
preference, and the gap, the seed excess, runs from $-7$ to $+59$ pp at
round 11; the negative end of the range is the near-liftless models
scattering around the flat line.

\textbf{Intensity is not attention.} Over the same twelve rounds the realised
number of seeds per panel falls from 30 to 2.4, against a design expectation
of 2.5, so the conditional rate and the
absolute one move in opposite directions. Seed citations per prompt fall
from 9.87 to 1.53, the unconditional probability that a given seed is cited
in a given prompt falls from 0.082 to 0.013, and the seeds' share of all
citations issued falls from everything to 15.6\%. The models cite each
surviving real paper harder and the real literature as a whole less. Both
follow from one fixed preference meeting a growing catalogue, and reporting
either alone misstates the process.

Citations do shape the catalogue, through content rather
than through exposure, since each round's papers are written from the
panels of the round before, cited and uncited alike. The trace is a cohort gradient
measurable inside a single round, where competition and panel composition
are identical across cohorts. At round 11 the seeds are cited on 63.4\% of
their exposures, the first generated cohort on 39.6\%, the second on
32.8\%, the third on 29.9\%, and every later cohort on about 28\%. The
same shape holds at round 6. No parent relation is defined here, since each
output is conditioned on a whole panel rather than on a cited subset, so this
is a cohort gradient rather than inheritance in any strict sense. It is
nonetheless the one respect in which the process feeds back on itself.

Three things change together
as the catalogue grows, and the trajectory alone cannot separate them. Real
papers become rare, which is the mechanism the next paragraphs test. Later
cohorts are cited less per exposure, as the gradient above shows, so the
generated class the seeds are compared against changes composition over
rounds. And displayed years differ between the classes by construction, since
seeds draw from 2017--2022 while generated papers draw from 2017--2025, and
Section~\ref{sec:mapistext} measured a real if small effect of displayed year.
The seed-to-generated ratio at round 1 is 1.01, essentially no gap at the
point where seeds are half the pool, which is what the last two explanations
predict and the first does not. We therefore do not attribute the magnitude of
the trajectory to rarity. What we attribute to rarity is the within-panel
result below, which compares each seed against itself inside a single round
and so holds year, cohort and quality fixed by construction.

\textbf{The mechanism.}
Why would a fixed preference produce a rising rate? Recall the choice law of Section~\ref{sec:benchmark}. A selection is a
weighted draw: each paper carries a latent weight, and a bibliography's
probability is proportional to the product of its members' weights,
Eq.~\eqref{eq:choicelaw}. We never estimate the weights individually
anywhere in this paper; they are the hypothesis space, and the map of Section~\ref{sec:round0} is the
footprint they leave. Suppose now that every seed's weight carries a common
factor $b \ge 1$ relative to the generated papers; call $b$ the
\emph{lift}. Heterogeneity within groups is allowed but carries no group information;
the idiosyncratic weights are independent draws from one common law, so the
groups differ only through $b$. The intuition is competitive. A shown
seed's chance of citation depends on who else is in its panel, and its
strongest rivals are other seeds, which carry the same lift. As seeds thin
out, each surviving seed competes mostly against generated papers it beats,
so the same fixed preference, spread over fewer eligible targets,
concentrates on each of them. Nothing about the model needs to drift.

\begin{proposition}[Heterogeneous rarity amplification]
\label{prop:rarity}
Consider a panel of size $m$ with budget $k$, drawn as in
Section~\ref{sec:benchmark}, in which a marked shown seed faces $x$ other
seeds with weights $\{b\,u_1, \dots, b\,u_x\}$ and $m-1-x$ generated papers
with weights $\{v_1, \dots, v_{m-1-x}\}$, where $b \ge 1$ and the idiosyncrasies
$u_i, v_j > 0$ are independent draws from a common nondegenerate law. Let
$r_{\mathrm{seed}}(x)$ be the resulting expected value of $r_{pj}$ for a
shown seed under the choice law \eqref{eq:choicelaw}, averaged over the
marked seed's own idiosyncrasy as well as its competitors'. Then for
$b > 1$ and $1 \le k < m$, $r_{\mathrm{seed}}(x)$ is strictly decreasing in
$x$; for $b = 1$ it is constant at $k/m$. Scarcity is required for
strictness, since at $k = m$ every shown paper is cited and the law is
indistinguishable from the null (Remark~\ref{rem:scarcity}).
Consequently, along the recursion, where a panel's seed count is
hypergeometric and stochastically decreasing in $t$ (mean $120\,m/N_t$), the
seed citation rate when shown,
$R_{\mathrm{seed}}(t) = \mathbb{E}\,[\,r_{\mathrm{seed}}(x)\,]$, is
strictly increasing in $t$ when $b > 1$ and flat at $k/m$ when $b = 1$.
\end{proposition}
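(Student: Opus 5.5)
The plan is a coupling argument. Its core is a monotonicity lemma for the fixed-budget choice law \eqref{eq:choicelaw}: the inclusion probability of a given shown paper strictly decreases when the weight of any competitor rises. Write $\pi_j(w) = w_j\, e_{k-1}(w_{-j}) / e_k(w)$ for the inclusion probability of paper $j$ on a panel with weight vector $w$. Fix a competitor $i \neq j$ and let $e'_\ell$ denote the elementary symmetric polynomials of the remaining $m-2$ weights. Then
\begin{equation*}
\pi_j(w) = w_j\,\frac{A + B w_i}{C + D w_i},
\end{equation*}
with
\begin{equation*}
A = e'_{k-1},\quad B = e'_{k-2},\quad C = e'_k + w_j e'_{k-1},\quad D = e'_{k-1} + w_j e'_{k-2}.
\end{equation*}
The sign of $\partial \pi_j / \partial w_i$ is the sign of $BC - AD = e'_{k-2} e'_k - (e'_{k-1})^2$. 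Newton's inequalities for positive reals make this strictly negative whenever $1 \le k < m$. For $k = 1$ we have $B = 0$ and the sign is that of $-AD < 0$. Near $k = m-1$ the term $e'_k$ vanishes and the sign is $-(e'_{k-1})^2 < 0$. At $k = m$ we get $\pi_j \equiv 1$, which is the content of Remark~\ref{rem:scarcity}. I expect this step to be the main technical obstacle. The delicate part is stating Newton's inequality in its strict form, with the right index conventions, at the boundary values of $k$.

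With the lemma in hand, compare $x$ with $x+1$ other seeds. Both configurations have $m-1$ competitors. They differ only in that one competitor drawn with generated weight $v$ in the $x$ configuration is drawn with seed weight $b u$ in the $x+1$ configuration. Since $u$ and $v$ come from the same law, I couple them by setting $u = v$ and keep all other idiosyncrasies, including the marked seed's $b u_0$, identical. Under this coupling the two panels differ in exactly one coordinate, $v$ versus $b v$. For $b > 1$ that coordinate is strictly larger, so the lemma gives a pointwise strict decrease in the marked seed's inclusion probability. Taking expectations over the idiosyncrasies gives $r_{\mathrm{seed}}(x+1) < r_{\mathrm{seed}}(x)$. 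Strictness survives because the pointwise inequality is strict on the whole positive orthant, and nondegeneracy plays no role here.

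For $b = 1$, all $m$ weights are i.i.d. Each realized inclusion vector sums to $k$, and by exchangeability the expected inclusion probabilities are all equal. Hence $r_{\mathrm{seed}}(x) = k/m$ for every $x$, consistent with Proposition~\ref{prop:null}.

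For the recursion, the catalogue holds $120$ seeds among $N_t = 120(t+1)$ papers, and the panel is drawn uniformly. Conditional on a marked seed being shown, its count $x_t$ of other seeds is hypergeometric: $m-1$ draws from $N_t - 1$ papers containing $119$ seeds. Growing $N_t$ while holding the number of seeds fixed makes $x_t$ strictly stochastically smaller. I would verify this by the standard monotone likelihood ratio property of the hypergeometric in the population size, or directly by a coupling that adds non-seed papers to the population. A strictly decreasing function of a random variable has a strictly larger expectation under a strictly stochastically smaller law. Therefore $R_{\mathrm{seed}}(t) = \mathbb{E}[r_{\mathrm{seed}}(x_t)]$ is strictly increasing in $t$ when $b > 1$, and constant at $k/m$ when $b = 1$. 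If $k_p$ varies across prompts, the same argument applies conditionally on each budget and then averages.
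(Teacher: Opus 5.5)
Your proposal is correct and follows essentially the paper's route. The paper likewise uses Newton's inequalities to show that a shown paper's inclusion probability falls as any competitor's weight rises; it works with the odds $\pi/(1-\pi)$ rather than $\pi$, but the sign is the same $e_{k-2}e_k - e_{k-1}^2$. It then swaps one generated competitor of weight $A$ for a seed of weight $bA$, uses exchangeability for $b=1$, and finishes with the hypergeometric stochastic ordering. Your refinement is worth keeping: the unnormalized Newton inequality $(e'_{k-1})^2 \ge c\, e'_{k-2}e'_k$ carries a constant $c>1$, so the derivative is strictly negative for every $1\le k<m$ and strictness needs no nondegeneracy of the common law. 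The paper, by contrast, states only the weak inequality and appeals to nondegeneracy for strictness.
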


\emph{In words:} the model's preference for seed papers never changes;
what changes is the panel. When a panel happens to contain fewer seeds,
each one competes with fewer other highly weighted papers for the same
citation budget, so its individual chance of being cited goes up.

The proposition's theorem is the slope; the level, $b > 1$ itself, is
read directly off the data as the gap between seed and generated citation
rates. The slope is the sharper half, since fewer seed rivals in a panel mean a
higher per-seed rate, a prediction about single panels that makes no
reference to time. The $b = 1$ case is realized in the data twice over. The null's seed
rate is flat by construction, and GPT-4.1 mini, the model with essentially
no lift, hugs that flat line throughout.

Trajectories alone cannot establish the mechanism. Between round 1 and
round 11 the catalogue, the null, and the text mix all change at once, so a
time trend could mimic the rise; the mechanism claim needs a test inside a
single moment, and the benchmark has already randomized one for us. Within
any one round, the number of seeds a panel contains is a hypergeometric
lottery; one prompt happens to draw 2 seeds, another 7, with the same
catalogue and the same round. That lottery lets us ask the mechanism's question directly. Is the same seed
paper cited more often in the panels where it happened to face fewer seed
rivals? For shown seed $i$
in prompt $p$, with $x_{-i,p}$ its number of co-shown seeds, we fit the
linear probability model
\begin{equation}
    \mathrm{cited}_{ip}
    \;=\;
    \beta\, x_{-i,p} + \gamma_{t(p)} + \alpha_i + \theta\,\mathrm{review}_p
    + \varepsilon_{ip},
    \label{eq:competition}
\end{equation}
where the fixed effects have plain readings: $\gamma_t$ compares panels
only within the same round, absorbing everything that drifts across rounds,
and $\alpha_i$ compares each seed only against itself, absorbing paper
quality, so popular seeds landing in seed-light panels by luck cannot fake
the result. We cluster errors by prompt; rounds 1--11 give 7{,}557 rows
over 1{,}292 prompt clusters per model. Round 0 is excluded because its
panels contain 30 seeds and the regressor has no variation. The proposition predicts two things about
this coefficient, that $\beta$ is negative whenever $b > 1$ and exactly zero
when $b = 1$, and both hold. Every one of the eight models has $\beta < 0$,
from $-3.07$ pp per seed competitor ($t = -11.3$) down to $-0.31$, with the
insignificant coefficients belonging to the near-liftless models
(Figure~\ref{fig:mechanism}a,b). The exact zero is verified rather than
assumed, since the identical regression on uniform redraws of the same panels
and budgets returns placebo coefficients within $\pm 0.08$ pp of zero. The full table, including the placebo and the
symmetric regression on generated papers, is App.~\ref{app:regression}.

The level and the slope are two functionals of the same parameter, estimated
from disjoint variation, one from the time-trajectory endpoint and one from
cross-panel luck inside single rounds, so a single lift driving both should
make them track each other across models. They do track each other
(Figure~\ref{fig:mechanism}c), with the Gemini 2.5 pair reversing on both
axes at once. We report the association rather than an exact rank match,
because neither functional is a monotone transform of $b$ once models are
allowed their own salience distributions, so the proposition does not predict
identical orderings and the data do not show them. The magnitude survives a
sanity check as well. Mean seed competitors fall from about 14 at round 1
to about 2.4 at round 11, so a linear $\beta = -3.07$ predicts a $+35$ pp
funnel against $+59$ observed; the shortfall has the expected sign, since
the binned response in Figure~\ref{fig:mechanism}a is convex, steepest
where competitors are few.

\begin{figure}[t]
    \centering
    \includegraphics[width=\textwidth]{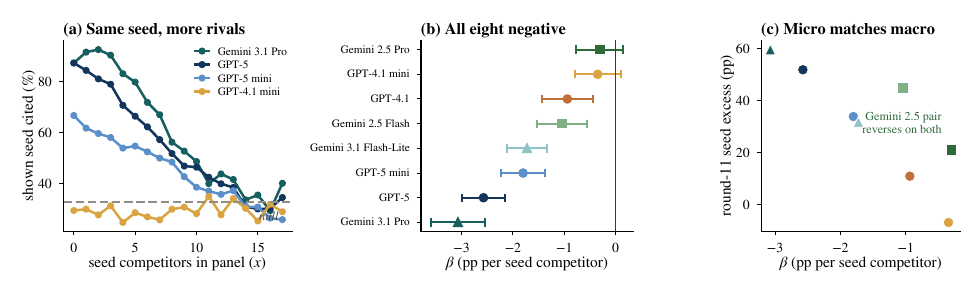}
    \caption{\textbf{The within-panel competition test.} (a)~Binned citation
    rate of a shown seed against its number of co-shown seeds, four
    representative models: monotone, convex, and flat at the null for the
    near-liftless model. (b)~$\beta$ from Eq.~\eqref{eq:competition} for all
    eight models with prompt-clustered 95\% intervals: all negative, the two
    near-liftless models not significant; the test lives inside single panels and uses no
time trend. (c)~The micro coefficient tracks the macro outcome: $\beta$
    against round-11 seed excess, the Gemini 2.5 pair reversing on both. The
    signed association is negative, equivalently positive between
    competition-effect magnitude $-\beta$ and seed excess, and is not an exact
    rank match.}
    \label{fig:mechanism}
\end{figure}

Capability tracks these quantities without governing them. Round-0
concentration and round-11 generated-only excess each order within three of
the recursion's four vendor pairs, the exception being GPT-5, whose mini
variant concentrates slightly more than the full model. The lift orders
within three of four as well, and there the exception is Gemini 2.5, where
the Pro is the more concentrated model at round 0 yet shows the weaker seed
preference.

Three limits bound what this section establishes. The seed effect is
composite, so the within-panel test isolates the competition mechanism rather
than the source of the lift, a confound taken up in
Section~\ref{sec:discussion}. The recursion runs on two vendors, so the
cross-vendor monoculture claims rest entirely on round 0, where all three are
present. And the trajectory itself, as set out above, does not identify
rarity as the driver of its own magnitude.

The picture at the end of the recursion is narrower than the trajectory
alone suggests. A shared preference over the real literature, meeting a
catalogue that dilutes, drives rising conditional selection onto a
shrinking real subset while total attention to that subset falls. The
shared component does not reproduce itself on the synthetic papers the
process generates, though each model's own concentration does, and
citability persists across a few cohorts before it washes out.


\section{Related Work}
\label{sec:related}

\textbf{Concentration in science.}
Cumulative advantage and the Matthew effect are the canonical account of
citation inequality: recognition flows disproportionately to already
recognized work and authors \citep{merton1968matthew, price1976general}, and
experiments in artificial cultural markets show that social signals alone can
manufacture inequality among items of equal quality
\citep{salganik2006experimental}. Under scarce attention
\citep{simon1971designing}, growing literatures concentrate reading onto
canon and consensus ossifies
\citep{evans2008electronic, chu2021slowed, nielsen2021global}.
Discrete-choice models of network formation bring the phenomenon down to the
level of individual decisions, modeling edge creation as a choice among local
alternatives rather than a global attachment rule \citep{overgoor2019choosing}.
The mechanisms studied in this literature are social, status, visibility, and
accumulated count; the benchmark studied here removes all three by
construction and asks what concentration remains.

\textbf{Self-consuming generative loops.}
Retraining generative models on their own output degrades quality and
diversity across generations
\citep{shumailov2024ai, alemohammad2024selfconsuming}, with theoretical work
locating the damage in the iterative estimation process
\citep{taori2023data, bertrand2024stability, dohmatob2024}
\ and bias amplification identified as a companion pathology of
model-induced distribution shift \citep{wyllie2024fairness}. The sharpest
open question is regime dependence: accumulating real and synthetic data
largely averts the collapse that replacement causes
\citep{gerstgrasser2024collapse, kazdan2025collapse}. The recursion studied
here is instead a selection loop with the models held fixed: what recurses is
the catalogue of papers, data accumulate rather than replace, and the
collapse concerns citation attention rather than model weights.

\textbf{AI in scientific writing and citation auditing.}
Language models are measurably present in scientific papers and peer reviews
\citep{liang2024mapping, liang2024monitoring}, and early system-level
evidence suggests they raise individual productivity while contracting the
collective focus of science \citep{hao2024expand}. The most audited failure
mode of model-generated bibliographies is fabrication
\citep{walters2023fabrication, agrawal2024language}. Citation-specific audits
go further: generated reference lists remain semantically aligned yet favor
already prominent, recent, and prestigious work
\citep{algaba2025how}
, and controlled-panel studies detect selection biases when candidate sets
are held fixed \citep{he2025who}.
\ Relative to these audits, the present benchmark blinds all metadata,
replaces naturally occurring retrieval with random exposure, measures the
full paper-level preference structure rather than a single varied attribute,
and adds domain experts judging identical panels under the identical cap.

\textbf{Popularity bias and feedback loops in recommendation and search.}
Recommender systems that repeatedly overexpose popular items narrow
consumption diversity and homogenize users over time
\citep{fleder2009blockbuster, chaney2018algorithmic,
klimashevskaia2024survey, wu2024result}, a concern about engineered
visibility that predates generative systems in the politics of search
\citep{introna2000shaping}. Early audits of generative search engines find
the same signature in synthesized answers, whose citations concentrate onto a
narrow head of sources \citep{yang2025, kirsten2025}.
These loops operate through exposure, what the system chooses to show; in the
benchmark studied here exposure is uniformly random, so the concentration
that remains is a preference-side loop that no exposure-diversification
intervention can remove.

\textbf{Algorithmic monoculture and homogenization.}
When many decision makers reuse the same algorithm, their decisions become
correlated, which can lower welfare even if each decision is individually
accurate \citep{kleinberg2021algorithmic, creel2022algorithmic} and
homogenizes outcomes for the individuals being judged
\citep{bommasani2022picking}. For language models the evidence to date is
output-side: individual models collapse onto narrow output distributions
\citep{wu2025generative}, different models produce strikingly similar
creative and epistemic content
\citep{anderson2024homogenization, padmakumar2024does, doshi2024generative,
wenger2025different}, and the projected endpoint for information ecosystems
has been named knowledge collapse \citep{peterson2025ai, wright2025epistemic}.
This paper measures the preference function behind such outputs, across
eleven models and three vendors, quantifies exactly how much of the resulting
concentration mixing models can remove, and supplies the heterogeneous
human baseline that the monoculture literature postulates but has not had on
identical stimuli.


\section{Discussion and Conclusion}
\label{sec:discussion}\label{sec:conclusion}

Three findings carry the argument. Every model tested concentrates citations
and excludes papers far beyond its own matched null, while pooled experts on
identical panels sit at that null. The deviations are one shared map rather
than eleven, so mixing can remove only what the models do not share. And
under recursion the preference itself holds fixed while the catalogue dilutes
around it, raising the rate at which each surviving real paper is cited even
as real papers' share of all citations falls, with the within-panel test
confirming the competition mechanism inside single panels and free of any
time trend.

Where the map comes from remains open; whether it exists does not. Moving a
record's text onto a different record carries the citation propensity with
it, and rewriting every title and abstract leaves the map unmoved, so if the
map is an absorbed citation consensus it was absorbed at the level of
meaning, where it becomes empirically adjacent to a convergent judgment about
what a citable paper reads like. Every result here holds under either
reading, and extending the crossover across selector models is future work.

Corollary~\ref{cor:floor} explains why the instinctive remedy is limited, and
the optimized mixture prices it. Mixing the eleven models uniformly removes
28\% of the excess, and optimizing the weights rather than assuming them
reaches 55\% of the average single model's under cross-fitting, 52\% in
sample, though
mostly by discarding the models that concentrate hardest rather than by
decorrelating those it keeps, since the best single model alone reaches 60\%.
That floor describes the current model population rather than a constant of
nature, but any decorrelation strategy has to be demonstrated against the
map, because cross-vendor agreement already nearly matches within-vendor
agreement. Exposure was uniform here by construction, so equalizing it is not
available as a remedy, and whether a policy that deliberately oversamples
low-propensity papers would help is untested. What is ruled out is the hope
that unequal exposure was the cause.

The design measures choice among papers already placed in front of a model,
on one topic and a narrow range of task formats, with the parsed bibliography treated as an
unordered set. It says nothing about retrieval, about which of a model's
suggestions an author keeps, or about what reaches the published record. The
recursion adds papers to a catalogue but never lets a citation change what is
shown next, so it studies a fixed preference meeting a growing corpus rather
than an ecosystem responding to its own output. The seed lift is composite,
mixing a preference for real work with any quality gap between real and
model-written papers, which is why the within-panel test isolates the
competition mechanism rather than the source of the lift. The concentration
identities rest on conditional prompt independence
(Assumption~\ref{assump:indep}) and remain exact without it up to a
covariance remainder derived in App.~\ref{app:proofs}; the rarity result adds
the weighted choice law and a common-lift model on top of that.

Within that scope the result is specific. Every displayed signal of status
was removed from the corpus and citation credit still condensed, onto the
same papers, for eleven models from three vendors, on abstracts whose authors
and years were invented. Domain experts judging identical panels showed no
comparable preference we could detect; a protocol-matched sensitivity
calculation puts any preference they carry at 7.5\% of the average model's. The concentration is therefore not a retrieval problem and not a
prestige problem, and hallucination audits will not find it, because every
reference involved is real. It sits in the conditional selection step rather
than in unequal visibility, which is where an intervention would have to act.

\section*{Acknowledgments}

This work was supported by ONR grant N00014-23-1-2714, DOE grant
DE-SC0020345, DOI grant 140D0423C0076, NSF Awards 2145346 (CAREER),
02133861 (DMS) and 2113904 (CCSS), and the NSF AI Institute for Foundations
of Machine Learning (IFML). This work was also supported by computing
resources on the Vista GPU Cluster through the Center for Generative AI
(CGAI) and the Texas Advanced Computing Center (TACC) at The University of
Texas at Austin.

\clearpage
\bibliographystyle{plain}
\bibliography{References}

\clearpage
\appendix

\section{Notation}
\label{app:notation}

Every symbol used in the main text, grouped by where it first appears.
Symbols that occur only inside one proposition are listed with that
proposition. Section~\ref{sec:benchmark} defines the design in prose.

\paragraph{Design and protocol.}
\begin{center}\small
\begin{tabular}{@{}llp{0.44\textwidth}@{}}
\toprule
symbol & meaning & definition \\
\midrule
$M$ & prompts per round & $M = 120$, half review and half position piece \\
$m$ & panel size & $m = 30$ papers shown per prompt \\
$p$ & prompt index & $p = 1,\dots,M$ \\
$j$ & paper index & over the catalogue \\
$t$ & round index & $t = 0,\dots,11$; round 0 is the static benchmark \\
$U_t$ & catalogue at round $t$ & 120 seeds plus all generated papers so far \\
$N_t$ & catalogue size & $N_t = 120\,(t+1)$, so $N_{11} = 1{,}440$ \\
$E_p$ & panel & $E_p \subset U_t$, drawn uniformly, $|E_p| = m$ \\
$C_p$ & bibliography & $C_p \subseteq E_p$, the parsed valid citations \\
$k_p$ & realized budget & $k_p = |C_p| \le 10$ \\
$s_t$ & exposure probability & $s_t = m/N_t$; $s = 0.25$ at round 0 \\
\bottomrule
\end{tabular}
\end{center}

\paragraph{Selection, maps, and concentration.}
\begin{center}\small
\begin{tabular}{@{}llp{0.44\textwidth}@{}}
\toprule
symbol & meaning & definition \\
\midrule
$w_{pj}$ & latent weight & the choice law's preference parameter under prompt $p$, Eq.~\eqref{eq:choicelaw}; prompt-invariance is tested in \S\ref{sec:onemap}, not assumed \\
$r_{pj}$ & citation probability & $\Pr(j \in C_p \mid E_p, k_p)$ \\
$c_j,\; n_j$ & citations, exposures & counts for paper $j$ over a run \\
map & per-paper deviation & vector with $j$th entry $c_j/n_j$ minus the null's \\
$\delta_{pj}$ & budget-neutral deviation & $\delta_{pj} = s\,(r_{pj} - k_p/m)$ \\
$d_p$ & deviation vector & $d_p \in \mathbb{R}^{N_t}$, collecting $\delta_{pj}$ over $j$ \\
$D$ & deviation matrix & $D \in \mathbb{R}^{M \times N_t}$, stacking the $d_p$ \\
$\bar{\boldsymbol\delta}$ & mean row of $D$ & the selector's shared direction \\
$B$ & idiosyncratic part & $B = D - \mathbf{1}\bar{\boldsymbol\delta}^{\!\top}$ \\
$K$ & total citations & summed over all prompts \\
$\HHI$ & concentration index & $\sum_j (c_j / \sum_i c_i)^2$; $0.0083$ under even spreading \\
top-decile share & concentration & citations to the 12 most-cited papers, as a fraction \\
$r$ & correlation & Pearson, between two maps, across the 120 papers \\
\bottomrule
\end{tabular}
\end{center}

\paragraph{Shared structure and mixing (Props.~\ref{prop:exch},~\ref{prop:map}, Cor.~\ref{cor:floor}).}
\begin{center}\small
\begin{tabular}{@{}llp{0.44\textwidth}@{}}
\toprule
symbol & meaning & definition \\
\midrule
$A$ & a group of papers & the set within which exchangeability is tested \\
$R_A$ & group citation rate & average rate when shown, over $A$ \\
$\Excl^{\mathrm{ex}}_{A}$ & never-cited probability & under exchangeability within $A$, Eq.~\eqref{eq:exch} \\
$\rho$ & alignment ratio & share of preference strength on the shared direction \\
$r_s$ & stable rank & $r_s(D) = \|D\|_F^2 / \|D\|_2^2$, with $\rho \le 1/r_s(D)$; the estimated version is a corrected spectral ratio (App.~\ref{app:mapest}) \\
$L$ & preference families & number of distinct maps in a mixture \\
$\boldsymbol\delta$ & shared component & the map common to all families \\
\bottomrule
\end{tabular}
\end{center}

\paragraph{Recursion (Prop.~\ref{prop:rarity}, Eq.~\eqref{eq:competition}).}
\begin{center}\small
\begin{tabular}{@{}llp{0.44\textwidth}@{}}
\toprule
symbol & meaning & definition \\
\midrule
$b$ & seed lift & multiplicative factor on seed weights, $b \ge 1$ \\
$u_i,\; v_j$ & idiosyncrasies & i.i.d.\ positive weights from a common law $F$ \\
$x$ & seed competitors & other seeds facing a marked shown seed \\
$r_{\mathrm{seed}}(x)$ & seed rate at $x$ & expected citation rate of a shown seed \\
$R_{\mathrm{seed}}(t)$ & seed rate at round $t$ & $\mathbb{E}[\,r_{\mathrm{seed}}(x)\,]$ over the round's panels \\
$\beta$ & competition slope & coefficient on co-shown seeds, Eq.~\eqref{eq:competition} \\
$\gamma_t$ & round fixed effect & compares panels only within the same round \\
$\theta$ & task effect & review versus position piece \\
\bottomrule
\end{tabular}
\end{center}

\section{Theory}
\label{app:proofs}

This appendix restates every formal result of the paper in one place, in
logical order, and proves it. Everything below is a consequence of the
choice law, Eq.~\eqref{eq:choicelaw}, and the design of
Section~\ref{sec:benchmark}; Assumption~\ref{assump:indep} is invoked only
where marked, and the identities that use it remain exact without it, up to
the covariance remainder $\Gamma$ made explicit below. Table~\ref{tab:xwalk}
maps each statement to where it is stated and used in the main text.

\begin{table}[ht]
    \centering
    \caption{Crosswalk between the theory and its use.}
    \label{tab:xwalk}
    \small
    \begin{tabular}{@{}llll@{}}
        \toprule
        Statement & Stated & Fired at & Proof \\
        \midrule
        Prop.~\ref{prop:null} (protocol-matched null) & \S\ref{sec:benchmark} & every ``excess'' in the paper & \S\ref{app:null} \\
        Assumption~\ref{assump:indep} (prompt independence) & \S\ref{sec:benchmark} & scope in \S\ref{app:setting} & -- \\
        Lemma~\ref{lem:conc} (concentration identity) & here & \S\ref{sec:onemap}, \S\ref{sec:recursion} & \S\ref{app:identities} \\
        Lemma~\ref{lem:overlap} (overlap equivalence) & here & \S\ref{sec:onemap}, \S\ref{sec:recursion} & \S\ref{app:identities} \\
        Prop.~\ref{prop:exch} (exchangeability bound) & \S\ref{sec:collapse} & Figure~\ref{fig:symmetry} & \S\ref{app:exclusion} \\
        Prop.~\ref{prop:map} (shared map, spectrum) & \S\ref{sec:onemap} & Table~\ref{tab:round0}, the identity check & \S\ref{app:map} \\
        Cor.~\ref{cor:orth} (orthogonal ideal) & here & \S\ref{sec:onemap}, italicized & \S\ref{app:map} \\
        Cor.~\ref{cor:floor} (shared floor) & \S\ref{sec:onemap} & the uniform-mixture comparison & \S\ref{app:map} \\
        Lemma~\ref{lem:excl} (never-cited probability) & here & inside Prop.~\ref{prop:exch} & \S\ref{app:exclusion} \\
        Prop.~\ref{prop:rarity} (rarity amplification) & \S\ref{sec:recursion} & Figures~\ref{fig:recursion}c, \ref{fig:mechanism} & \S\ref{app:rarity} \\
        \bottomrule
    \end{tabular}
\end{table}

\subsection{Setting, notation, and the assumption's exact scope}
\label{app:setting}

Fix a round $t$ with catalogue $U_t$ of size $N_t$; we drop the round
subscript on all per-prompt quantities. Prompt $p \in \{1, \dots, M\}$ is
shown a uniformly random panel $E_p \subset U_t$ with $|E_p| = m$ and cites
$C_p \subseteq E_p$ with $|C_p| = k_p$, according to the choice law
\eqref{eq:choicelaw} with positive weights $w_{pj}$. Summing the law over
admissible subsets containing a marked shown paper gives its exact inclusion
probability on a realized panel,
\begin{equation}
    \pi_j(E_p)
    \;=\;
    \Pr\!\left(j \in C_p \mid E_p\right)
    \;=\;
    \frac{w_{pj}\, e_{k_p-1}\!\left(w_{p,\,E_p \setminus \{j\}}\right)}
         {e_{k_p}\!\left(w_{p,\,E_p}\right)},
    \label{eq:inclusion}
\end{equation}
and $r_{pj} = \mathbb{E}\left[\pi_j(E_p) \mid j \in E_p\right]$ is its
marginalization over panels containing $j$. Write $X_{pj}$ and $Y_{pj}$ for
the exposure and citation indicators, so that under uniform exposure
$\Pr(X_{pj} = 1) = s = m/N_t$ and
$q_{pj} = \Pr(Y_{pj} = 1) = s\, r_{pj}$. The per-paper citation count is
$C_j = \sum_p Y_{pj}$, the total is $K = \sum_p k_p = \sum_j C_j$, and
$\HHI = \sum_j (C_j / K)^2$.

Assumption~\ref{assump:indep} states that, given the panels and realized
budgets, the bibliographies of distinct prompts are independent. It is used
only for the cleanest forms of the overlap, exclusion, and shared-map
results; where it fails, Lemmas~\ref{lem:conc} and \ref{lem:overlap} below
remain exact with the covariance remainder
\begin{equation}
    \Gamma_{pq} \;=\; \sum_{j=1}^{N_t} \operatorname{Cov}\!\left(Y_{pj},\, Y_{qj}\right)
    \label{eq:gamma}
\end{equation}
carried explicitly.

\subsection{The null}
\label{app:null}

\begin{restatement}{Proposition~\ref{prop:null} (Protocol-matched null; restated)}
If all shown weights are equal, every $k_p$-subset of the panel is equally
likely; consequently $\Pr(j \in C_p \mid j \in E_p) = k_p/m$ for every shown
paper, and the unconditional citation probability of any catalogue paper is
$k_p / N_t$.
\end{restatement}

\begin{proof}
If all shown weights in $E_p$ equal a common constant $c > 0$, then for
every admissible subset $C \subseteq E_p$ with $|C| = k_p$ the numerator of
the choice law is $c^{k_p}$. The denominator is the number of admissible
subsets times the same constant, so every admissible subset is equally
likely. The inclusion probability of any shown paper is therefore the
fraction of all $k_p$-subsets that contain it, namely $k_p/m$. Because the
exposure probability of any catalogue paper is $m/N_t$, the unconditional
citation probability is $(m/N_t)(k_p/m) = k_p/N_t$.
\end{proof}

Define the deviation of prompt $p$ at paper $j$ from its own null,
\begin{equation}
    \delta_{pj} \;=\; q_{pj} - \frac{k_p}{N_t}
    \;=\; s\left(r_{pj} - \frac{k_p}{m}\right),
    \qquad
    d_p = (\delta_{p1}, \dots, \delta_{pN_t})^{\!\top} .
    \label{eq:deviation}
\end{equation}
Because each prompt cites exactly $k_p$ papers,
$\sum_{j} \delta_{pj} = 0$ for every $p$: deviations are budget-neutral, a
fact used repeatedly below.

\subsection{The two accounting identities}
\label{app:identities}

\begin{lemma}[Covariance-aware excess-concentration identity]
\label{lem:conc}
For every round,
\[
    \mathbb{E}[\HHI] - \mathbb{E}[\HHI^{\mathrm{null}}]
    \;=\;
    \frac{2}{K^2} \sum_{1 \le p < q \le M}
    \left( \langle d_p, d_q \rangle + \Gamma_{pq} \right).
\]
In particular, under Assumption~\ref{assump:indep},
$\mathbb{E}[\HHI] - \mathbb{E}[\HHI^{\mathrm{null}}]
= \tfrac{2}{K^2} \sum_{p<q} \langle d_p, d_q \rangle$.
\end{lemma}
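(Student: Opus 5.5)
The plan is to expand the squared counts and compare the model and the null term by term, treating $K$ as fixed. $K = \sum_p k_p$ is determined by the realized budgets, and the null inherits exactly these budgets, so the normalization is the same on both sides. Hence
\[
\mathbb{E}[\HHI] - \mathbb{E}[\HHI^{\mathrm{null}}] = \frac{1}{K^2}\sum_j \Bigl(\mathbb{E}[C_j^2] - \mathbb{E}[(C_j^{\mathrm{null}})^2]\Bigr).
\]
Write $C_j = \sum_p Y_{pj}$, so that $C_j^2 = \sum_p Y_{pj}^2 + 2\sum_{p<q} Y_{pj}Y_{qj}$.

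\textbf{Diagonal terms.} Because $Y_{pj}$ is an indicator, $\mathbb{E}[Y_{pj}^2] = q_{pj}$. Summing over $j$ gives $\sum_j q_{pj} = \mathbb{E}|C_p| = k_p$. The same computation applies to the null, where $\sum_j k_p/N_t = k_p$. The diagonal contributions therefore cancel exactly. This is the budget-neutrality $\sum_j \delta_{pj} = 0$ from Eq.~\eqref{eq:deviation}.

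\textbf{Cross terms, model.} By the definition of covariance, $\mathbb{E}[Y_{pj}Y_{qj}] = q_{pj}q_{qj} + \operatorname{Cov}(Y_{pj},Y_{qj})$. Summing over $j$ gives $\sum_j q_{pj}q_{qj} + \Gamma_{pq}$, using Eq.~\eqref{eq:gamma}.

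\textbf{Cross terms, null.} The null redraws uniformly and independently for each prompt, given that prompt's panel and budget. Under uniform exposure, Proposition~\ref{prop:null} gives $\Pr(Y^{\mathrm{null}}_{pj} = 1) = k_p/N_t$. I will use the unconditional version here: panels are drawn independently across prompts, so the null indicators of distinct prompts are independent. The null cross term is therefore $\sum_j k_pk_q/N_t^2$.

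\textbf{Combining.} Substitute $q_{pj} = k_p/N_t + \delta_{pj}$ and expand:
\[
\sum_j q_{pj}q_{qj} - \sum_j \frac{k_pk_q}{N_t^2} = \langle d_p,d_q\rangle + \frac{k_p}{N_t}\sum_j \delta_{qj} + \frac{k_q}{N_t}\sum_j \delta_{pj}.
\]
Both linear terms vanish by budget neutrality. What remains is $\langle d_p,d_q\rangle + \Gamma_{pq}$, summed over $p<q$ with the factor $2/K^2$, which is the stated identity. Under Assumption~\ref{assump:indep} each $\Gamma_{pq}$ is zero, which gives the special case. One caveat: the assumption gives independence only conditionally on the panels, while here the unconditional covariance is what matters. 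Since distinct prompts also draw their panels independently, the unconditional covariance vanishes too, and I would state this step explicitly.

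\textbf{Main obstacle.} The delicate part is the conditioning convention. The lemma must hold with $q_{pj}$ and $\Gamma_{pq}$ taken unconditionally over panels, while $k_p$ is treated as fixed. I would either condition throughout on the budget vector, or note that the covariance remainder absorbs any randomness in $k_p$. Either way, the cancellation must use the same $k_p$ in the model and in the null, which is exactly what the protocol-matched construction of the null guarantees.
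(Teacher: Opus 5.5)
Your proposal is correct and follows the paper's proof essentially line for line. Like the paper, you expand $C_j^2$ into diagonal and cross terms, write $\mathbb{E}[Y_{pj}Y_{qj}] = q_{pj}q_{qj} + \operatorname{Cov}(Y_{pj},Y_{qj})$, substitute $q_{pj} = k_p/N_t + \delta_{pj}$, and eliminate the linear and mixed terms by budget neutrality. You also note that the null cross term $k_pk_q/N_t^2$, and $\Gamma_{pq}=0$ under Assumption~\ref{assump:indep}, both rely on panels drawn independently across prompts with $k_p$ held fixed; this makes precise steps the paper leaves implicit.
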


\begin{proof}
By definition,
$\mathbb{E}[\HHI] = \tfrac{1}{K^2} \sum_j \mathbb{E}[C_j^2]$. Because
$C_j = \sum_p Y_{pj}$,
\[
    \mathbb{E}[C_j^2]
    = \sum_{p=1}^{M} \mathbb{E}[Y_{pj}]
    + 2 \sum_{1 \le p < q \le M} \mathbb{E}[Y_{pj} Y_{qj}],
    \qquad
    \mathbb{E}[Y_{pj} Y_{qj}] = q_{pj} q_{qj} + \operatorname{Cov}(Y_{pj}, Y_{qj}).
\]
Under the protocol-matched null, each $q_{pj}$ is replaced by $k_p/N_t$.
Subtracting the null expression and expanding
$q_{pj} = k_p/N_t + \delta_{pj}$ yields, for each $j$, the linear term
$\sum_p \delta_{pj}$ and, for each pair $p < q$, the cross terms
$\delta_{pj}\delta_{qj} + (k_p/N_t)\delta_{qj} + (k_q/N_t)\delta_{pj}
+ \operatorname{Cov}(Y_{pj}, Y_{qj})$. Summing over $j$, the linear and
mixed terms vanish by budget neutrality
($\sum_j \delta_{pj} = 0$ for every $p$), leaving
\[
    \mathbb{E}[\HHI] - \mathbb{E}[\HHI^{\mathrm{null}}]
    = \frac{2}{K^2} \sum_{1 \le p < q \le M}
    \Bigl( \textstyle\sum_j \delta_{pj} \delta_{qj}
    + \sum_j \operatorname{Cov}(Y_{pj}, Y_{qj}) \Bigr),
\]
which is the claimed identity; the second form follows since Assumption~\ref{assump:indep} sets every $\Gamma_{pq}$ to zero.
\end{proof}

\begin{lemma}[Bibliography-overlap equivalence]
\label{lem:overlap}
For prompts $p \neq q$, the overlap count
$O_{pq} = |C_p \cap C_q|$ satisfies
\[
    \mathbb{E}[O_{pq}] - \frac{k_p k_q}{N_t}
    \;=\;
    \langle d_p, d_q \rangle + \Gamma_{pq},
    \qquad\text{hence}\qquad
    \mathbb{E}[\HHI] - \mathbb{E}[\HHI^{\mathrm{null}}]
    =
    \frac{2}{K^2} \sum_{p<q}
    \left( \mathbb{E}[O_{pq}] - \frac{k_p k_q}{N_t} \right).
\]
\end{lemma}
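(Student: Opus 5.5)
The plan is to write the overlap count as a sum of products of citation indicators, $O_{pq} = \sum_{j=1}^{N_t} Y_{pj} Y_{qj}$, since a paper lies in $C_p \cap C_q$ exactly when both prompts cite it. Taking expectations term by term gives $\mathbb{E}[O_{pq}] = \sum_j \mathbb{E}[Y_{pj}Y_{qj}] = \sum_j \bigl(q_{pj}q_{qj} + \operatorname{Cov}(Y_{pj},Y_{qj})\bigr)$. The covariance terms sum to $\Gamma_{pq}$ by its definition in Eq.~\eqref{eq:gamma}, so everything reduces to the product term $\sum_j q_{pj}q_{qj}$.

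For that term I will substitute the deviation decomposition of Eq.~\eqref{eq:deviation}, namely $q_{pj} = k_p/N_t + \delta_{pj}$, and expand. This produces four sums:
\begin{itemize}
\item the constant sum $\sum_j k_pk_q/N_t^2 = k_pk_q/N_t$;
\item two mixed sums, $(k_p/N_t)\sum_j \delta_{qj}$ and $(k_q/N_t)\sum_j \delta_{pj}$;
\item the inner product $\langle d_p, d_q\rangle$.
\end{itemize}
Budget neutrality, $\sum_j \delta_{pj} = 0$ for every $p$, kills both mixed sums. This is the same cancellation used in the proof of Lemma~\ref{lem:conc}. Rearranging gives the first display: $\mathbb{E}[O_{pq}] - k_pk_q/N_t = \langle d_p,d_q\rangle + \Gamma_{pq}$.

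The second display then follows by substituting this per-pair identity into Lemma~\ref{lem:conc}. That lemma already expresses the HHI excess as $\tfrac{2}{K^2}\sum_{p<q}(\langle d_p,d_q\rangle + \Gamma_{pq})$, and each summand is now exactly $\mathbb{E}[O_{pq}] - k_pk_q/N_t$. Under Assumption~\ref{assump:indep} the $\Gamma_{pq}$ vanish, which recovers Eq.~\eqref{eq:overlapmain} of the main text.

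The only delicate step is bookkeeping about conditioning. The paper treats $k_p$ as the realized budget, and the null inherits it, so $k_p$, $k_q$ and $K$ should be held fixed; the expectations are taken over panels and choices with budgets given. I need to check that $q_{pj} = s\,r_{pj}$ is then the correct marginal in both the overlap expectation and Lemma~\ref{lem:conc}, so the two identities refer to the same $\delta$'s. I also need $\sum_j q_{pj} = k_p$ exactly, so that budget neutrality holds in this conditional sense. Once that convention is fixed consistently, the argument is a two-line expansion.
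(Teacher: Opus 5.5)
Your proposal is correct and follows the paper's proof essentially line for line: write $O_{pq}=\sum_j Y_{pj}Y_{qj}$, split each expectation into $q_{pj}q_{qj}$ plus a covariance (these covariances sum to $\Gamma_{pq}$), expand $q_{pj}=k_p/N_t+\delta_{pj}$ and cancel the mixed terms by budget neutrality, then substitute the per-pair identity into Lemma~\ref{lem:conc}. Your remark that $k_p$, $k_q$ and $K$ must be held fixed, so that $\sum_j q_{pj}=k_p$ holds exactly, makes explicit a convention the paper uses only implicitly.
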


\begin{proof}
By definition and taking expectations,
\[
    O_{pq} = \sum_j Y_{pj} Y_{qj},
    \qquad
    \mathbb{E}[O_{pq}] = \sum_j \left( q_{pj} q_{qj}
    + \operatorname{Cov}(Y_{pj}, Y_{qj}) \right).
\]
Under the null,
$\mathbb{E}[O^{\mathrm{null}}_{pq}]
= \sum_j (k_p/N_t)(k_q/N_t) = k_p k_q / N_t$. Subtracting and expanding
$q_{pj} = k_p/N_t + \delta_{pj}$, the mixed terms again vanish by budget
neutrality, leaving
$\langle d_p, d_q \rangle + \Gamma_{pq}$. Substituting into
Lemma~\ref{lem:conc} gives the second display.
\end{proof}

Together the lemmas say: excess concentration is exactly accumulated excess
pairwise bibliography overlap, modulo the same remainder. One biased map and
homogenized bibliographies are one number, which is how
Section~\ref{sec:onemap} uses them.

\subsection{The shared map, the spectrum, and mixtures}
\label{app:map}

Stack the deviation vectors \eqref{eq:deviation} into
$D \in \mathbb{R}^{M \times N_t}$ with mean row
$\bar{\boldsymbol\delta} = \tfrac{1}{M} D^{\!\top} \mathbf{1}_M$,
idiosyncratic part
$B = D - \mathbf{1}_M \bar{\boldsymbol\delta}^{\!\top}$, alignment ratio
$\rho = \|D^{\!\top}\mathbf{1}_M\|_2^2 / (M \|D\|_F^2)$, and stable rank
$r_s(D) = \|D\|_F^2 / \|D\|_2^2$. The realized-panel rows used by the
estimator of App.~\ref{app:rho} (entries $\hat r_{pj} - k_p/m$ on the shown
panel, zero elsewhere) are the natural unbiased estimates of these
population rows up to the common factor $s$, which cancels in $\rho$ and
$r_s$.

\begin{restatement}{Proposition~\ref{prop:map} (Shared map and preference spectrum; restated)}
Under Assumption~\ref{assump:indep},
\[
    \mathbb{E}[\HHI] - \mathbb{E}[\HHI^{\mathrm{null}}]
    =
    \frac{M(M-1)\|\bar{\boldsymbol\delta}\|_2^2 - \|B\|_F^2}{K^2}
    =
    \frac{\|D\|_F^2}{K^2}\left(M\rho - 1\right),
    \qquad
    \rho \le \frac{1}{r_s(D)} .
\]
The bound is tight whenever $D$ has rank one with $\mathbf{1}_M$ a leading
left singular vector.
\end{restatement}

\begin{proof}
Under Assumption~\ref{assump:indep} the remainder in Lemma~\ref{lem:conc}
vanishes, so
$\mathbb{E}[\HHI] - \mathbb{E}[\HHI^{\mathrm{null}}]
= \tfrac{2}{K^2} \sum_{p<q} \langle d_p, d_q \rangle$. Now
\[
    \|D^{\!\top}\mathbf{1}_M\|_2^2
    = \Bigl\| \sum_p d_p \Bigr\|_2^2
    = \sum_p \|d_p\|_2^2 + 2\sum_{p<q} \langle d_p, d_q \rangle
    = \|D\|_F^2 + 2\sum_{p<q} \langle d_p, d_q \rangle,
\]
so the excess equals
$\left( \|D^{\!\top}\mathbf{1}_M\|_2^2 - \|D\|_F^2 \right) / K^2$. Because
$D = \mathbf{1}_M \bar{\boldsymbol\delta}^{\!\top} + B$ with
$\mathbf{1}_M^{\!\top} B = 0$, we have
$D^{\!\top}\mathbf{1}_M = M \bar{\boldsymbol\delta}$ and
$\|D\|_F^2 = M\|\bar{\boldsymbol\delta}\|_2^2 + \|B\|_F^2$; substituting
gives the first equality. The second is the definition of $\rho$. For the
bound, $\|D^{\!\top}\mathbf{1}_M\|_2 \le \|D\|_2 \|\mathbf{1}_M\|_2
= \sqrt{M}\,\|D\|_2$; squaring and dividing by $M\|D\|_F^2$ gives
$\rho \le \|D\|_2^2 / \|D\|_F^2 = 1/r_s(D)$. Tightness under the rank-one
alignment condition is immediate.
\end{proof}

\begin{corollary}[Orthogonal ideal]
\label{cor:orth}
Suppose the $M$ prompts are partitioned into $L$ equal families, each family
sharing a deviation vector $\boldsymbol d_\ell$ with the $\boldsymbol d_\ell$ pairwise
orthogonal and $\|\boldsymbol d_\ell\|_2 = a$ for all $\ell$. Under
Assumption~\ref{assump:indep},
\[
    \mathbb{E}[\HHI] - \mathbb{E}[\HHI^{\mathrm{null}}]
    =
    \frac{M\left(\tfrac{M}{L} - 1\right) a^2}{K^2},
\]
strictly decreasing in the number $L$ of families.
\end{corollary}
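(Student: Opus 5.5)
This follows directly from Proposition~\ref{prop:map}: compute $\|D\|_F^2$ and $\|D^{\top}\mathbf{1}_M\|_2^2$ for the family-structured deviation matrix and substitute. Under Assumption~\ref{assump:indep}, the proposition gives
\[
\mathbb{E}[\HHI]-\mathbb{E}[\HHI^{\mathrm{null}}]=\frac{\|D^{\top}\mathbf{1}_M\|_2^2-\|D\|_F^2}{K^2}.
\]
Equivalently, by Lemma~\ref{lem:conc} it equals $\tfrac{2}{K^2}\sum_{p<q}\langle d_p,d_q\rangle$. Either route works; the sum-of-rows form is cleaner.

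First, let $\ell(p)$ denote the family of prompt $p$, so that $d_p=\boldsymbol d_{\ell(p)}$ and each family contains $M/L$ prompts. Then
\[
\|D\|_F^2=\sum_p\|d_p\|_2^2=Ma^2,
\qquad
D^{\top}\mathbf{1}_M=\frac{M}{L}\sum_{\ell=1}^{L}\boldsymbol d_\ell .
\]
Second, pairwise orthogonality and equal norms give, by Pythagoras,
\[
\|D^{\top}\mathbf{1}_M\|_2^2=\frac{M^2}{L^2}\cdot La^2=\frac{M^2a^2}{L}.
\]
Subtracting yields $Ma^2\left(\tfrac{M}{L}-1\right)$, and dividing by $K^2$ gives the claimed formula. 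As a cross-check, in the pairwise form only same-family pairs contribute, each adding $a^2$. There are $L\binom{M/L}{2}$ such pairs, so
\[
\frac{2}{K^2}\,L\,\frac{(M/L)(M/L-1)}{2}\,a^2
\]
matches.

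For monotonicity, $K=\sum_p k_p$ does not depend on the family partition, and $M\left(\tfrac{M}{L}-1\right)a^2$ is strictly decreasing in $L$ whenever $a>0$. One caveat: if $a=0$ the excess is identically zero and strictness fails, so I would note that $a>0$ is implicitly required. I would also note that $L$ must divide $M$ for the families to be equal.

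There is no real obstacle; it is one computation. The only point needing care is that family deviation vectors inherit budget neutrality, $\sum_j\delta_{pj}=0$. That property is already absorbed into Lemma~\ref{lem:conc}, so it needs no further use here.
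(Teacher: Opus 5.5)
Your proof is correct and is essentially the paper's. The paper's proof is exactly your cross-check: it counts the $L\binom{M/L}{2}$ within-family pairs, each contributing $a^2$ in Lemma~\ref{lem:conc}, and your row-sum computation through Proposition~\ref{prop:map} is the same identity repackaged. Your remark that strictness needs $a>0$ (with $K$ held fixed) is a fair point the paper leaves implicit.
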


\begin{proof}
By orthogonality, the only nonzero inner products
$\langle d_p, d_q \rangle$ come from pairs within one family, each
contributing $a^2$; there are $L \binom{M/L}{2}$ such pairs. By
Lemma~\ref{lem:conc} under independence, the excess is
$\tfrac{2}{K^2} L \binom{M/L}{2} a^2 = M(\tfrac{M}{L}-1) a^2 / K^2$.
\end{proof}

\begin{restatement}{Corollary~\ref{cor:floor} (Plurality with a shared floor; restated)}
If the family maps are
$\boldsymbol d_\ell = \boldsymbol\delta + \boldsymbol\varepsilon_\ell$ with
residuals pairwise orthogonal, orthogonal to $\boldsymbol\delta$, and of
common energy $\varepsilon^2$, then under
Assumption~\ref{assump:indep} a uniform mixture over $L$ families satisfies
\[
    \mathbb{E}[\HHI] - \mathbb{E}[\HHI^{\mathrm{null}}]
    =
    \frac{M(M-1)\,\|\boldsymbol\delta\|_2^2}{K^2}
    +
    \frac{M\left(\tfrac{M}{L} - 1\right)\varepsilon^2}{K^2}.
\]
\end{restatement}

\begin{proof}
For two prompts in the same family,
$\langle \boldsymbol d_\ell, \boldsymbol d_\ell \rangle
= \|\boldsymbol\delta\|_2^2 + \varepsilon^2$ by the orthogonality of
$\boldsymbol\varepsilon_\ell$ to $\boldsymbol\delta$; for prompts in
different families $\ell \neq \ell'$,
$\langle \boldsymbol d_\ell, \boldsymbol d_{\ell'} \rangle = \|\boldsymbol\delta\|_2^2$ by
the pairwise orthogonality of the residuals. Summing over all
$\binom{M}{2}$ pairs, of which $L\binom{M/L}{2}$ are within-family,
\[
    2 \sum_{p<q} \langle d_p, d_q \rangle
    = M(M-1)\,\|\boldsymbol\delta\|_2^2
    + 2 L \tbinom{M/L}{2}\, \varepsilon^2
    = M(M-1)\,\|\boldsymbol\delta\|_2^2
    + M\left(\tfrac{M}{L} - 1\right) \varepsilon^2 ,
\]
and Lemma~\ref{lem:conc} under independence gives the claim. The first term
does not depend on $L$: it is the floor. Setting
$\boldsymbol\delta = 0$ recovers Corollary~\ref{cor:orth}.
\end{proof}

\subsection{Exclusion and the impossibility test}
\label{app:exclusion}

\begin{lemma}[The never-cited probability]
\label{lem:excl}
Under Assumption~\ref{assump:indep}, for any catalogue paper $j$,
\begin{align*}
    \Excl_j
    &\;=\;
    \Pr\!\left(j \text{ shown at least once but never cited}
    \mid j \text{ shown at least once}\right) \\[2pt]
    &\;=\;
    \frac{\prod_{p=1}^{M}\left(1 - s\, r_{pj}\right) - (1-s)^{M}}
         {1 - (1-s)^{M}} .
\end{align*}
\end{lemma}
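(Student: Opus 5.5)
The plan is to write the conditional probability as a ratio, $\Excl_j = \Pr(\text{never cited}) - \Pr(\text{never shown})$ over $\Pr(\text{shown at least once})$. This works because the event ``never shown'' is contained in ``never cited'': $Y_{pj}=1$ forces $X_{pj}=1$, since $C_p \subseteq E_p$. So the numerator is $\Pr(\text{never cited}) - \Pr(\text{never shown})$, and the denominator is $1 - \Pr(\text{never shown})$.

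\textbf{The two unconditional probabilities.} Panels are drawn independently and uniformly across prompts, so $\Pr(\text{never shown}) = \prod_p (1-s) = (1-s)^M$. For ``never cited'', I use $\Pr(Y_{pj}=0) = 1 - q_{pj} = 1 - s\,r_{pj}$, which comes from $q_{pj} = s\,r_{pj}$ in App.~\ref{app:setting}. The identity then needs the events $\{Y_{pj}=0\}$ to be independent across $p$, so that $\Pr(\text{never cited}) = \prod_p (1 - s\,r_{pj})$.

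\textbf{Establishing independence across prompts.} Assumption~\ref{assump:indep} gives independence of the $C_p$ conditional on $\{E_p\}$ and $\{k_p\}$. The panels are themselves independent uniform draws, one per prompt. I will treat each prompt's budget as depending only on that prompt's own call, so that $(E_p, k_p, C_p)$ are independent across $p$. Then
\[
\Pr\Bigl(\bigcap_p \{Y_{pj}=0\}\Bigr) = \mathbb{E}\Bigl[\prod_p \Pr(Y_{pj}=0 \mid E_p, k_p)\Bigr] = \prod_p \mathbb{E}\bigl[1 - X_{pj}\,\pi_j(E_p)\bigr] = \prod_p (1 - s\,r_{pj}),
\]
where the final step uses $\mathbb{E}[X_{pj}\,\pi_j(E_p)] = s\,\mathbb{E}[\pi_j(E_p) \mid j \in E_p] = s\,r_{pj}$.

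The main obstacle is this factorization. The assumption is conditional, so the product form needs the conditioning variables to be independent across prompts as well. I will state explicitly that this is supplied by the randomized design, namely independent panel draws, together with per-prompt budgets. If $k_p$ is regarded as fixed design data, this holds trivially.

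Substituting the two probabilities into the ratio gives the formula in the statement.
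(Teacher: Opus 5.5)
Your proposal is correct and follows the paper's proof: the containment of ``never shown'' in ``never cited'' via $Y_{pj} \le X_{pj}$, the product $\prod_p (1 - s\,r_{pj})$ for the never-cited event, $(1-s)^M$ for the never-shown event, and division by $1-(1-s)^M$. Your explicit factorization step is a useful clarification rather than a different route: Assumption~\ref{assump:indep} is only conditional on panels and budgets, so the unconditional product also needs independent panel draws and per-prompt budgets, which the paper compresses into ``under independence.''
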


\begin{proof}
Under independence, the event that $j$ is never cited has probability
$\prod_p (1 - q_{pj})$ with $q_{pj} = s\, r_{pj}$, and the event that $j$ is
never shown has probability $(1-s)^M$. Since $Y_{pj} \le X_{pj}$ almost
surely, ``never shown'' is contained in ``never cited,'' so
$\Pr(\text{shown at least once and never cited})
= \prod_p (1 - s r_{pj}) - (1-s)^M$. Conditioning on ``shown at least
once,'' of probability $1 - (1-s)^M$, gives the claim.
\end{proof}

\begin{restatement}{Proposition~\ref{prop:exch} (Exchangeability upper bound; restated)}
Under Assumption~\ref{assump:indep} and exchangeability within a group $A$
($r_{pj} = r^{A}_{p}$ for all $j \in A$, with group average $R_A$),
\[
    \Excl^{\mathrm{ex}}_{A}
    \;\le\;
    \frac{(1 - s R_A)^{M} - (1-s)^{M}}{1 - (1-s)^{M}},
\]
with equality iff all $r^{A}_{p}$ are equal.
\end{restatement}

\begin{proof}
Under exchangeability, Lemma~\ref{lem:excl} gives
$\Excl^{\mathrm{ex}}_A = \bigl[ \prod_p (1 - s\, r^A_p) - (1-s)^M \bigr] /
\bigl[ 1 - (1-s)^M \bigr]$. The function
$\varphi(r) = \log(1 - s r)$ is strictly concave on $[0, 1]$ since
$\varphi''(r) = -s^2/(1 - s r)^2 < 0$. By Jensen's inequality,
$\tfrac{1}{M} \sum_p \log(1 - s\, r^A_p)
\le \log\!\left(1 - s R_A\right)$, i.e.
$\prod_p (1 - s\, r^A_p) \le (1 - s R_A)^M$, with equality iff all
$r^A_p$ coincide. Substituting into the exact formula proves the bound.
\end{proof}

The impossibility reading used in Section~\ref{sec:collapse} is as
follows. At round 0, $s = 0.25$ and $M = 120$, so at the observed group
rates the bound is of order $10^{-5}$, and observed exclusion fractions of several percent are
infeasible under any exchangeable model, however heterogeneous across
prompts. Stable within-group heterogeneity, a paper-level map, is thereby
forced, without estimating it.

\subsection{Rarity amplification}
\label{app:rarity}

The multiplicative-lift model of Section~\ref{sec:recursion}: base
saliences $A_0, A_1, \dots$ are i.i.d.\ positive with nondegenerate law
$F$; a seed carries weight $b A_j$ with lift $b \ge 1$, a generated paper
carries $A_j$. For a marked shown seed of own base salience $a$, facing $x$
other seeds among its $m-1$ competitors, define
\begin{equation}
    g_x(a)
    \;=\;
    \mathbb{E}\!\left[
    \pi\!\left(b a;\; b A_1, \dots, b A_x,\; A_{x+1}, \dots, A_{m-1}\right)
    \right],
    \label{eq:gx}
\end{equation}
with $\pi$ the inclusion probability \eqref{eq:inclusion} at budget $k$,
and the expectation over the competitors' saliences.

\begin{restatement}{Proposition~\ref{prop:rarity} (Heterogeneous rarity amplification; restated)}
In the multiplicative-lift model, for every $a > 0$: $g_x(a)$ is strictly
decreasing in $x$ when $b > 1$, and does not depend on $x$ when $b = 1$. In
the latter case $\mathbb{E}_{A_0 \sim F}\,[g_x(A_0)] = k/m$ for every $x$;
the value at a fixed $a$ is not $k/m$ in general, since a marked seed of
unusually high salience is cited more often than one of unusually low
salience whatever the panel composition.
Conditional on a marked seed being shown at round $t$, its number of seed
competitors is
$X_t \sim \mathrm{Hypergeom}(N_t - 1,\; 119,\; m - 1)$, a family
stochastically decreasing in $t$; consequently the seed citation rate when
shown, $R_{\mathrm{seed}}(t) = \mathbb{E}\left[g_{X_t}(A_0)\right]$, which
averages over $A_0$, is increasing in $t$ when $b > 1$ and flat at $k/m$
when $b = 1$.
\end{restatement}

\begin{proof}
Fix a realized panel with budget $k$, a marked paper of weight $w$, and
competitor weights $v_1, \dots, v_{m-1}$; by \eqref{eq:inclusion},
$\pi = w\, e_{k-1}(v) / e_k(w, v)$. For one competitor weight $v_\ell$,
write $A$ for the remaining $m-2$ competitor weights; then
\[
    \frac{\pi}{1 - \pi}
    = w\, \frac{e_{k-1}(A) + v_\ell\, e_{k-2}(A)}{e_k(A) + v_\ell\, e_{k-1}(A)},
    \qquad
    \frac{\partial}{\partial v_\ell} \frac{\pi}{1-\pi}
    = w\, \frac{e_{k-2}(A)\, e_k(A) - e_{k-1}(A)^2}
               {\left(e_k(A) + v_\ell\, e_{k-1}(A)\right)^2}.
\]
By Newton's inequalities for elementary symmetric polynomials of positive
arguments, $e_{k-1}(A)^2 \ge e_{k-2}(A)\, e_k(A)$, so the derivative is
nonpositive: the marked paper's inclusion probability is weakly decreasing
in every competitor's weight. Passing from $g_x$ to $g_{x+1}$ replaces one
generated competitor of weight $A$ by a seed competitor of weight $bA$,
which weakly increases that competitor's weight pointwise; the decrease is
strict on a set of positive measure when $b > 1$ and $F$ is nondegenerate,
so $g_x(a)$ is strictly decreasing in $x$. When $b = 1$ all $m$ weights are
i.i.d., so the competitor law is the same whether a slot holds a seed or a
generated paper; the right-hand side of \eqref{eq:gx} therefore does not
depend on $x$, and we may write $g_x(a) = g(a)$. Exchangeability of the
marked paper with its competitors then gives
$\mathbb{E}_{A_0 \sim F}\,[g(A_0)] = k/m$, since the $m$ exchangeable
inclusion probabilities sum to $k$. Note that this identity holds only
after averaging over the marked paper's own salience: $g(a)$ itself is
increasing in $a$ and equals $k/m$ only at the salience where the average
is attained.

Conditional on the marked seed being shown at round $t$, the other $m - 1$
panel slots are a uniform draw from the remaining $N_t - 1$ papers, of
which $119$ are seeds, giving the hypergeometric law; as $N_t$ grows with
$m$ and the seed count fixed, the family is stochastically decreasing in
$t$. Because $g_x(a)$ is decreasing in $x$ for every $a$, and $A_0$ is
independent of $X_t$,
$R_{\mathrm{seed}}(t) = \mathbb{E}[g_{X_t}(A_0)]$ is increasing in $t$ for
$b > 1$. For $b = 1$ the same expectation is
$\mathbb{E}_{A_0}[g(A_0)] = k/m$ for every $t$, so the trajectory is flat.
\end{proof}

\begin{remark}[Scarcity boundary]
\label{rem:scarcity}
If $k_p = m$, every shown paper is cited and the choice law is
observationally indistinguishable from the null: the theory predicts the
uncapped pilot of Section~\ref{sec:benchmark} in the limit; at the pilot's
realized budgets of 22--27 the remark is qualitatively consistent rather than
exact. The budget is the
switch that makes preference visible; where between weak and strong
scarcity the transition happens is an empirical question, not a theoretical
one.
\end{remark}

\section{Benchmark Construction, Prompts, Parsing, and Compliance}
\label{app:benchmark}

\subsection{Corpus and Metadata}\label{app:corpus}

We built the benchmark on 120 real knowledge-distillation papers collected
from arXiv, published 2015--2022, each with between 50 and 500 citations at
collection time; the band excludes both obscure and canonical work so that
no paper is an outlier on visibility. Each paper enters the benchmark as a
blinded record with four fields: a fabricated single author surname, a
reassigned publication year (seed years drawn uniformly from 2017--2022;
generated papers during the recursion receive years in 2017--2025), the
paper's real title, and its real abstract. Citation counts, venues, real
author identities, and links are never shown. Fabricated surnames can
collide across records; a record's identity is therefore the (surname,
year) \emph{pair}, and pairs are unique within every panel. Domain labels
used by the probes of Section~\ref{sec:mapcontent} come from a separate
categorization of the real papers and are never shown to any selector. The
full corpus listing is App.~\ref{app:seeds}.

\subsection{Prompt Templates}\label{app:prompts}

Each prompt renders its panel of 30 blinded records into the following
template (the position variant; the review variant is byte-identical except
that the opening verb phrase ``argue a position on'' is replaced by
``synthesize what is known about''). A fixed system prompt, shared by all
vendors, instructs the model to return JSON with title, abstract, and body
fields and to cite inline as (Surname, Year):

\begin{quote}\footnotesize\ttfamily\raggedright
"Using only the articles provided, argue a position on the following
topic. Support your claims by citing relevant articles inline, written as
(Surname, Year). Cite only articles from the provided list, copying each
author surname and year exactly. Do not write "et al.", initials, square
brackets, or numbered citations. IMPORTANT: cite AT MOST 10 distinct
articles in total -- never more than 10. Topic: Knowledge distillation or
model compression in deep learning Articles: [\{'author': 'Gardner',
'year': 2020, 'title': 'Adversarially Robust Distillation', 'abstract':
'Knowledge distillation is effective for producing small, high-performance
neural networks for classification, but these small networks are
vulnerable to adversarial attacks. [\ldots]'\}, \textrm{\emph{followed by
the remaining 29 records of the panel in the same format}}]
\end{quote}

The topic line is fixed for the corpus; the panel records appear in the
prompt in their sampled order. Sixty prompts per round use the review task
sentence and sixty the position sentence.

\subsection{Parsing and Its Sensitivity}\label{app:parsing}

We parse citations deterministically from the generated text, through two
paths with different jobs. The authoritative path is table-driven: for each
prompt it searches the body only for the (surname, year) pairs of that
prompt's own panel, counting a pair as cited when the surname leads, an
optional possessive or ``et al.''/``and X'' tail follows, and at most eight
characters of citation punctuation (never letters or digits) separate it
from its exact year, with word boundaries on both sides. This rule accepts
every citation style the models produced, including styles the prompt
forbade, such as one model family's bracketed pair lists ``([Surname,
Year]; \ldots)'' and another's separately bracketed ``[Surname] [Year]'',
while structurally rejecting prose coincidences like ``In 2019, \ldots'',
where the year precedes the name or words intervene. Because the search
vocabulary is the panel itself, a false positive requires a shown surname
adjacent to its own displayed year, and misses are limited to paraphrases
with intervening words, which read as uncited; the parser ships with a
fourteen-style adversarial suite covering the observed conventions
(accents, particles, apostrophes, year-suffix letters, mixed bracket
styles, prose traps), which the table-driven path passes in full. A
bibliography is the \emph{set} of distinct resolved panel papers, so
repeated citations of one paper count once and $k_p$ is the number of
distinct valid citations. The second path, a style-tolerant regex with no
candidate table, is used only for hallucination accounting: any extracted
pair that matches no shown record, whether it names a catalogue paper not
in this panel or a paper that does not exist at all, is recorded as a
hallucination; the two cases are indistinguishable to the protocol and both
are removed before every analysis.

\subsection{Compliance and Data Quality}\label{app:compliance}

We report round-0 data quality per model in Table~\ref{tab:compliance}. Budgets
sit essentially at the cap (mean realized $k_p$ between 9.6 and 10.0), and
after regeneration of non-compliant runs no round-0 prompt exceeds the cap
for any model. Hallucination, measured as the mean per-prompt rate of
citations that resolve to no shown record, is at most 1.2\% for any model
at round 0. In the recursion it is at most 1.4\% for any model pooled over
rounds; per-round means reach 2.4\% for the weakest model in isolated
rounds, and every flagged citation is removed regardless, so no analysis
sees a hallucinated reference. The Claude models could not sustain cap
compliance across recursion rounds and are excluded from the recursion:
truncating an over-budget bibliography would censor the choice under study
rather than constrain it, and regeneration did not converge for these
models at recursion scale. The uncapped pilot referenced in
Section~\ref{sec:benchmark}, in which models cite 22--27 of 30 shown papers
and are statistically indistinguishable from the null, used the same
templates with the budget sentence removed.

\begin{table}[ht]
    \centering
    \caption{Round-0 compliance and data quality. Mean realized budget
    $\bar k$ and mean per-prompt hallucination rate; all hallucinated
    citations are removed before analysis.}
    \label{tab:compliance}
    \small
    \begin{tabular}{@{}lcc@{}}
        \toprule
        Model & Mean budget $\bar k$ & Hallucination rate (\%) \\
        \midrule
        GPT-4.1 mini & 9.86 & 1.12 \\
        Claude Haiku 4.5 & 10.00 & 0.36 \\
        Gemini 3.1 Flash-Lite & 9.76 & 0.34 \\
        GPT-5 mini & 9.86 & 0.22 \\
        Claude Opus 4.8 & 10.00 & 0.09 \\
        Gemini 2.5 Flash & 9.98 & 0.08 \\
        Claude Sonnet 4.6 & 10.00 & 0.00 \\
        Gemini 2.5 Pro & 9.99 & 0.00 \\
        Gemini 3.1 Pro & 9.91 & 0.00 \\
        GPT-4.1 & 9.62 & 0.00 \\
        GPT-5 & 10.00 & 0.00 \\
        \bottomrule
    \end{tabular}
\end{table}

\section{Maps, Reliability, and the Vote Histogram}
\label{app:maps}

\subsection{Map Estimation and Reliability}\label{app:mapest}

For each selector, we estimate a paper's citation rate when shown as
$\hat r_j = (\text{times cited}) / (\text{times shown})$ over the round's
prompts; each paper is shown roughly 30 times at round 0. We estimate the
null rate from the 150 uniform redraws that inherit every prompt's panel
and realized budget, and the map is the vector of differences
$\hat r_j - \hat r_j^{\mathrm{null}}$, following the definition of
Section~\ref{sec:benchmark}. We compute split-half reliability by randomly splitting the prompts in
half, estimating the map on each half, correlating the two estimates, and
applying the Spearman--Brown correction $2r/(1+r)$, averaged over 60 random
splits. We disattenuate cross-model correlations by dividing by
$\sqrt{\mathrm{rel}_a\,\mathrm{rel}_b}$, and the spectrum of
Figure~\ref{fig:onemap} is that of the cross-model covariance with each
diagonal entry reduced by the map's mean binomial sampling variance, so
that estimation noise does not masquerade as idiosyncratic preference.

\subsection{The Replicate Run}\label{app:replicate}

Three independent generations of GPT-5 mini's round 0 on identical panels
and protocol, differing only in sampling seed, reproduce the map at a mean
pairwise correlation of 0.973 (range 0.972--0.976) at the benchmark's
120-prompt design, and 0.993 at 480 prompts. This is the generation-noise
ceiling against which the other reliability figures in the paper are read,
and it sits above the split-half estimate of 0.955, as it must, since
split-half partitions prompts and so varies panels and list order as well as
generation. Per-prompt bibliographies overlap the original run at a mean
Jaccard of 0.52. The replicate is the resampling placebo used in
Section~\ref{sec:onemap}. A resampled model is not a new preference
family, so the share of excess that mixing removes cannot be manufactured
by sampling noise.

\subsection{The Vote Histogram, Predicted}\label{app:votehist}

Because all eleven models judge byte-identical panels, every (prompt, shown
paper) instance carries a vote count: the number of models citing that
paper in that prompt. Figure~\ref{fig:votes} shows the distribution of
these counts. The shared-map prediction treats the eleven decisions on one
instance as independent Bernoulli draws with the models' own estimated
rates $\hat r_{m j}$, a Poisson--binomial mixture aggregated over
instances; the independent-selectors baseline replaces every rate by the
common mean, a single binomial with matched marginals. The observed
distribution piles mass at both ends, 19\% of instances draw no citation,
46\% at most two, and 2.5\% are unanimous (probability $5 \times 10^{-6}$
under the matched binomial), and the shared-map prediction reproduces it at
total-variation distance 0.11 against 0.43 for the baseline.

\begin{figure}[ht]
    \centering
    \includegraphics[width=0.55\textwidth]{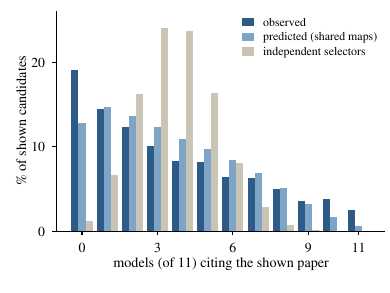}
    \caption{\textbf{Jointly decided.} Votes over shown candidates:
    observed; predicted from the estimated maps under conditional
    independence across models (total-variation distance 0.11);
    independent selectors with the same marginals (0.43).}
    \label{fig:votes}
\end{figure}

\section{The Noise-Corrected Alignment Ratio and Stable Rank}
\label{app:rho}

The estimator behind Table~\ref{tab:round0}'s $\rho$ column and the identity
check of Section~\ref{sec:onemap}. For each model, stack the realized-panel
deviation rows $x_p \in \mathbb{R}^{N_t}$, with $\hat r_{pj} - k_p/m$ on the
shown panel and zero elsewhere (the natural estimate of the deviation
vectors $d_p$ of App.~\ref{app:map}, up to the common exposure factor), and
form the prompt-level Gram matrix $G = X X^{\!\top} \in
\mathbb{R}^{M \times M}$. Its off-diagonal entries are unbiased for the
population inner products $\langle d_p, d_q \rangle$, since
$\mathbb{E}[x_{pj}] = s\,(r_{pj} - k_p/m) = \delta_{pj}$, but each diagonal
entry carries the prompt's own sampling variance in addition to
$\|d_p\|_2^2$; we therefore subtract from every diagonal entry the
plug-in inflation
$\hat v = \sum_j \left[ s\, \hat r_j (1 - \hat r_j)
+ s(1-s)(\hat r_j - \bar k / m)^2 \right]$,
yielding the corrected Gram $G_c$. The estimators are
\[
    \hat\rho
    = \frac{\mathbf{1}^{\!\top} G_c\, \mathbf{1}}{M \operatorname{tr} G_c},
    \qquad
    \hat r_s
    = \frac{\operatorname{tr} G_c}{\lambda_{\max}(G_c)} ,
\]
the sample versions of the alignment ratio and of the stable rank of
Proposition~\ref{prop:map}. We call $\hat r_s$ a corrected spectral ratio
rather than a stable rank, since diagonal subtraction can leave $G_c$
indefinite and the bound of Proposition~\ref{prop:map} then does not apply
to it. Across
the eleven models, $\hat\rho$ runs 0.89--0.98 against $1/M = 0.008$ for
unaligned prompts, and $\hat r_s$ runs 0.85--1.06, statistically
indistinguishable from one. Two of the eleven stable-rank point estimates fall marginally below one. A stable rank cannot be less than one for a positive semidefinite matrix, so those values indicate that the plug-in correction removes slightly more than the sampling noise, for the reason given above: the pooled per-paper rate absorbs between-prompt variation as well as within-prompt noise. We therefore read the corrected quantities as upper bounds and report the uncorrected values alongside them throughout.

The identity check uses the same objects with no correction needed,
because by Lemma~\ref{lem:conc} under Assumption~\ref{assump:indep} the
excess HHI equals $\bigl( \mathbf{1}^{\!\top} G \mathbf{1}
- \operatorname{tr} G \bigr) / K^2$, which involves only off-diagonal
entries. Figure~\ref{fig:identity} plots this reconstruction against the
observed excess (model HHI minus the mean of 200 uniform redraws inheriting
each prompt's panel and budget) for all eleven models: correlation 0.999,
mean relative deviation 1.6\%, within the redraw noise of the null itself.

\begin{figure}[ht]
    \centering
    \includegraphics[width=0.42\textwidth]{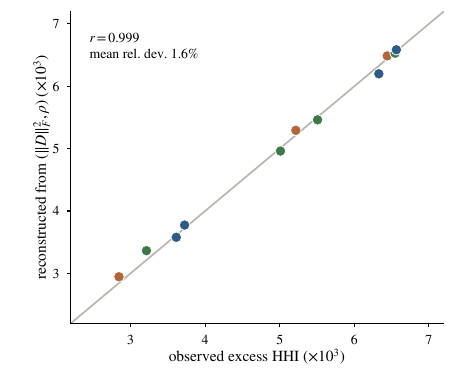}
    \caption{\textbf{The identity audits the pipeline.} Excess HHI
    reconstructed from the deviation Gram matrix via
    Eq.~\eqref{eq:sharedmap} against the observed excess over the matched
    null, one point per model, colored by vendor; the line is $y = x$.}
    \label{fig:identity}
\end{figure}

\section{The Human Study: Protocol, the Symmetry Computed, Robustness}
\label{app:human}

\subsection{Protocol}\label{app:protocol}

Eight researchers familiar with the knowledge-distillation literature
completed prompts through a web interface. Each session presented one
prompt's panel exactly as the models saw it: the same 30 blinded records
(fabricated surname, reassigned year, real title and abstract), the same
task sentence, and the same instruction to cite at most 10. The interface did
not hard-block a longer list, and one annotator selected 11 on one prompt;
every null inherits the realized budget, so the protocol is satisfied
prompt by prompt regardless. Annotators selected
the papers they would cite; free text was not required, so the human data is
selection-only. Because the models in the main protocol write prose, we also
reran all eleven in a matched selection-only mode
(App.~\ref{app:selonly}).
Between them the experts covered 53 of the 120 prompts, with effort
concentrated in one principal annotator (41 of 53); the imbalance is
addressed in \S\ref{app:robustness}. All human-model comparisons in the
paper are computed on the covered slice with model statistics recomputed on
the identical slice.

\subsection{The Symmetry, Computed}\label{app:symmetry}

The pooled human test of Section~\ref{sec:collapse} is the exchangeability
goodness-of-fit at the humans' own coverage. For each paper $j$, let $n_j$
be its shows and $c_j$ its citations across the covered prompts; under
exchangeability with per-prompt generosity $k_p / m$, paper $j$'s expected
citations are $e_j = \sum_{p \ni j} k_p / m$ with variance
$v_j = \sum_{p \ni j} (k_p/m)(1 - k_p/m)$, and the statistic is
$\chi^2 = \sum_j (c_j - e_j)^2 / v_j$ over shown papers, with degrees of
freedom the number of shown papers minus one. The pooled experts give
$\chi^2 = 121.5$ on 119 degrees of freedom, $p = 0.42$. The models'
verdicts on the same axes use Proposition~\ref{prop:exch} at the round-0
protocol ($s = 0.25$, $M = 120$). With each model's own observed group rate
$R$, the exchangeable exclusion probability per paper is of order
$10^{-5}$. The observed exclusion counts are priced by a panel-preserving
randomization rather than a binomial tail, since papers within a prompt
compete for a single budget and are therefore not independent: each realized
panel is redrawn under its own budget, 4{,}000 times. The null count never
exceeds one across those draws, while the models range from 2 to 20 and the
pooled experts give 2 against a null mean of 1.6 ($p = 0.48$). Monte Carlo
resolution bounds every model's $p$ at $2.5 \times 10^{-4}$; the separation
is one of support rather than of tail mass. The bound in
Figure~\ref{fig:symmetry} is drawn for the model protocol; the human point
is tested at its own coverage as above.

\subsection{Bounding Expert Preference Strength}\label{app:equiv}

Non-rejection of the uniform null is not evidence that the experts carry no
map, so we report an upper bound on how much map they could be carrying
undetected. This appendix gives the estimand, the estimator, and the
construction of the interval quoted in the abstract and in
Section~\ref{sec:collapse}.

\paragraph{Estimand.} For a selector on a fixed set of prompts, let $d_j$ be
paper $j$'s true citation propensity minus its matched-null rate, and define
\emph{preference strength} as $\mathcal{E} = \operatorname{Var}_j(d_j)$, the
dispersion of the map across papers. A selector with no map has
$\mathcal{E} = 0$ regardless of how generous or stingy it is overall, since
the null rate absorbs the level.

\paragraph{Estimator.} Let $\hat d_j$ be the observed rate minus the matched
null. Its variance across papers is inflated by sampling: with $n_j$
exposures per paper and a realized budget, $\operatorname{Var}_j(\hat d_j) =
\mathcal{E} + \nu$, where $\nu$ is the dispersion a mapless selector would
show on the same panels. We estimate $\nu$ by panel-preserving
randomization. Each realized panel is redrawn under its own realized budget,
sampling without replacement, which reproduces both the exposure pattern and
the negative within-prompt dependence that a fixed budget induces. Writing
$\nu_b$ for the map dispersion of draw $b$ over $B = 4{,}000$ draws,
\[
    \widehat{\mathcal{E}} \;=\; \operatorname{Var}_j(\hat d_j) \;-\;
    \tfrac{1}{B}\textstyle\sum_b \nu_b ,
    \qquad
    \widehat{\mathcal{E}}^{\,+} \;=\; \operatorname{Var}_j(\hat d_j) \;-\;
    q_{2.5}(\nu) ,
\]
with $q_{2.5}$ the $2.5$th percentile of the null dispersions. We call
$\widehat{\mathcal{E}}^{\,+}$ a protocol-matched sensitivity bound rather
than a formal $95\%$ upper confidence limit. It asks how large the true
preference strength could be while the observed dispersion stays consistent
with a mapless selector on these panels, but its coverage under a nonzero
heterogeneous map is not established, and establishing it would require
inverting a simulation over the actual panels and budgets.

\paragraph{Matching.} The experts completed 53 prompts, so every model is
restricted to the same 53 prompts, the same panels and the same realized
budgets before the comparison. This matters: on 53 prompts a paper receives
about 13 exposures rather than 30, so $\nu$ is roughly twice its full-run
value and every estimate is correspondingly looser. Comparing an expert bound
computed at 53 prompts against model estimates computed at 120 would
overstate the gap.

\paragraph{Result.} Matched in this way, the eleven models give
$\widehat{\mathcal{E}}$ from 0.033 to 0.082 with a mean of 0.058. The pooled
experts give $\widehat{\mathcal{E}} = 0.0006$ and
$\widehat{\mathcal{E}}^{\,+} = 0.0044$. The bound is therefore $7.5\%$
of the average model's estimate ($0.0044/0.0584$), and the least concentrated
single model exceeds the limit by a factor of $7.6$ ($0.0333/0.0044$). The two
ratios have different denominators and are not reciprocals of one another.

\paragraph{What this does and does not license.} It licenses the statement
that expert preference strength, if present, is at most a small fraction of
the models', at the coverage this study achieved. It does not license the
statement that the experts carry no map. Nor does it speak to individual
annotators: the estimand is a property of the pooled population, and pooling
heterogeneous individual maps reduces dispersion mechanically, so a pooled
bound is not a per-annotator bound. With 41 of the 53 prompts completed by a
single annotator the pooled quantity is close to that annotator's, but the
design cannot separate the two.

\subsection{Secondary Outcomes}\label{app:secondary}

Overlap: for each covered prompt and each frontier model, the number of
common citations between the human and model bibliographies is compared
with the hypergeometric law implied by the two budgets on the shared panel;
pooled over prompts, observed overlaps of 2.96--3.25 citations sit at the
expected $\approx\!3.06$ (Fisher-combined $p = 0.90$--$1.00$), while the
same models overlap one another at roughly twice chance on the same slice.
Stability: the split-half reliability of the pooled human map is
$0.03 \pm 0.13$ over random prompt splits (App.~\ref{app:mapest}
procedure), against 0.86--0.95 for every model. Spectrum: the raw stable
rank of the human prompt-level deviation matrix is 21, above every model's
(8.8--16.3), and after the noise correction of App.~\ref{app:rho} the
surviving signal energy is under 1\% of raw, against 5--11\% for the
models. Orthogonality is the supporting estimate. The pooled human map correlates
$r = 0.19$ with the consensus map, a projection coefficient of 0.12 on the
consensus direction and under 4\% shared variance
(Figure~\ref{fig:humanapp}a); with human reliability this low, the
disattenuated value is undefined, which is why we treat orthogonality as
directional support rather than a claim the argument depends on.

\subsection{Robustness and Scope}\label{app:robustness}

The human claims that carry weight are population-level and survive the
annotator imbalance. We compute the identical statistics for three
populations: all eight experts, the principal annotator alone, and the
seven others without the principal. Top-decile shares are 18.9\%, 19.6\%,
and 31.4\% against matched nulls of 18.7\%, 19.9\%, and 29.8\% (the
small-sample null is mechanically higher), with exchangeability $p$-values
of 0.22, 0.31, and 0.23; every human population sits at its own null and
passes the same test every model fails (Figure~\ref{fig:humanapp}b). One bookkeeping note: the $p$-values in this
table use the plug-in variance convention above, applied identically across
rows; the main text reports the analysis pipeline's value for the pooled
population ($\chi^2 = 121.5$, $p = 0.42$), which differs slightly in the
variance convention; the verdict is the same under both. Scope: 53 prompts
bound the precision of any human map estimate, which is why absence claims
run through the exchangeability test rather than through estimation, and
blinded selection is not citation practice in the wild.

\begin{figure}[ht]
    \centering
    \includegraphics[width=0.86\textwidth]{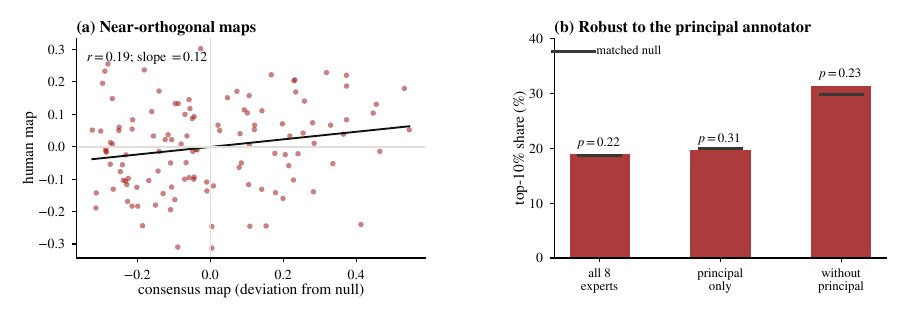}
    \caption{\textbf{The human study's supporting artifacts.}
    (a)~The pooled human map against the consensus map: $r = 0.19$,
    projection coefficient 0.12. (b)~Top-decile share for three human
    populations against their matched nulls (dashes), with exchangeability
    $p$-values: all at the null, including the principal annotator alone
    and the population without them.}
    \label{fig:humanapp}
\end{figure}

\section{What the Map Is Made Of: Probes and Their Caveats}
\label{app:mapchar}

The central claim of Section~\ref{sec:mapcontent} is the
cross-validated text model. The analysis pipeline's pre-specified model
predicts a held-out paper's consensus deviation from title and abstract at
$r = 0.47$ out of sample, the value reported in the main text. An
independent reimplementation here reproduces it. TF--IDF features
(unigrams and bigrams, English stop words removed, sublinear term
frequency, minimum document frequency 3) with ridge regression
($\alpha = 1$) under 10-fold cross-validation reach out-of-sample
$r = 0.49$ (Figure~\ref{fig:textmodel}); nearby configurations in a small
grid give similar values, and we note the mild optimism inherent in
reporting a grid's best configuration, which is why the main text cites the
pipeline's pre-specified 0.47 rather than this reproduction. Since the
displayed author names and years are fabricated, the predictable component
is a function of what the abstract says.

The probes are post hoc and should be read as such. The
methodological-coreness index is computed from each paper's title plus
abstract, lowercased. For each of two keyword lists we count raw substring
occurrences, so that ``distill'' also matches ``distillation'' and ``soft''
also matches ``softmax'', and repeated mentions count repeatedly; dividing
by the record's word count gives occurrences per 100 words, and the index
is the core-methods density minus the applications density. The
core-methods list is (teacher, student, soft, logit, temperature, dark
knowledge, mimic, distill, feature, layer, hint, representation); the
applications list is (video, 3d, speech, federated, reinforcement, policy,
agent, translation, recommend, medical, graph, detection, segmentation,
cross-modal, multimodal). The index runs from $-8.0$ to $+19.3$ across the
corpus; its extremes behave as intended, with ``Distilling Knowledge by
Mimicking Features'' at the top and a 3D object-detection paper at the
bottom. It correlates 0.54 with the consensus map and 0.26 with the human
map on matched axes. Domain labels are the corpus annotation of
Table~\ref{tab:seeds}: each seed carries one primary application-domain
label, assigned when the corpus was built (CV 51, NLP 30, Theory 11,
Graph 6, Federated 6, Speech 5, and four categories with fewer than five
papers), entered as one-hot dummies. The $R^2$ values are in-sample
ordinary least squares with an intercept, descriptive rather than
predictive and therefore not comparable to the cross-validated text model
above: coreness alone explains $R^2 = 0.29$ of the consensus map, domain
dummies alone 0.17, and both together 0.40. For the domain-tilt panel of
Figure~\ref{fig:mapcontent}c we regress the consensus map on coreness
alone, take mean residuals within each domain having at least five papers,
and Theory and Graph retain positive mean residuals and Speech a negative one. The
keyword lists were chosen by the authors, the index is one of many possible
operationalizations, and none of our claims depend on it: if the
coreness story were wrong, the text model, the shared map, and the human
asymmetry would stand untouched.

\begin{figure}[ht]
    \centering
    \includegraphics[width=0.42\textwidth]{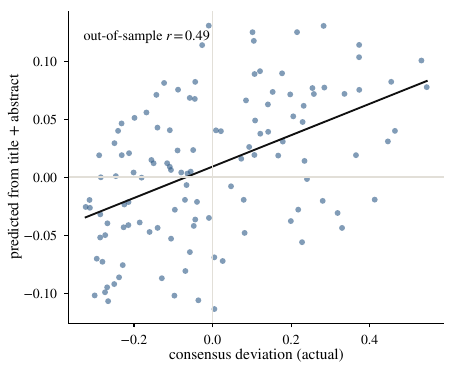}
    \caption{\textbf{The map is predictable from text.} Out-of-sample
    predictions of a TF--IDF ridge model on title and abstract against the
    actual consensus deviation, 10-fold cross-validation, one point per
    seed paper.}
    \label{fig:textmodel}
\end{figure}

\section{The Paraphrase Control}
\label{app:paraphrase}

The control tests whether the map depends on the seeds' surface form,
which any model may have memorized, or on their content. Each of the
eleven registry models rewrote the title and abstract of all 120 seed
records, preserving record identity and order so that panels, prompts, and
the RNG stream are byte-identical to the original run; GPT-5 mini then
replayed round 0 against each of the eleven paraphrased corpora
(1{,}320 prompts in total). A valid paraphrase must satisfy two
requirements that pull against each other, and failing either invalidates
the arm: the surface must be destroyed (mean 5-gram overlap with the
original at most 0.05, or a surviving memorization cue makes a null result
meaningless) and the meaning must be preserved (coreness-index correlation
above 0.95 with length change within 15\%, or the content variable itself
has moved and any difference is confounded). Four paraphrasers pass
cleanly, three graze one threshold each, and four are excluded: two kept
roughly three times the permitted surface (overlap 0.137 and 0.148) and
two summarized rather than paraphrased (coreness $r = 0.933$, length
$-15$\% and $-19$\%).

The map survives. Compliance is unchanged across every arm (mean budgets
9.80--9.97, hallucination 0.00--0.33\%), the collapse persists at its
original size (top-decile share 27.9--28.9\% against 28.5 originally, with
matched nulls near 15.6\%; 14--20 papers never cited against 15), and the
disattenuated correlation between each arm's map and the original-seed map
is 0.96--1.00 for the clean arms and 0.92--1.00 overall
(Figure~\ref{fig:paraphrase}). The four clean arms, written by four
different models from three vendors, correlate 0.95--0.99 with one another
(mean 0.97), so the signal
is the seeds' content rather than any paraphraser's style, and each arm's
map correlates 0.52--0.59 with the coreness index computed on the original
text, against 0.50 for the original map. The residual dose-response is
itself informative: extrapolated to zero surface overlap the map
correlation is 0.96, so surface cues contribute at most a few points, and
the only arms that move the map are the two whose paraphrases shifted
meaning, a positive control showing the map responds to content
perturbation exactly as a content readout should. Table~\ref{tab:paraqc}
reports every arm. The control runs one
model; paraphrase alone cannot exclude semantic recognition of a known
paper, but that residual channel must itself operate through content, and
the generated-only excess of Section~\ref{sec:recursion} shows the
collapse on papers that cannot appear in any training corpus.

\begin{table}[ht]
\centering
\caption{\textbf{The paraphrase control, arm by arm.} Quality control and
outcome for each paraphrasing model, sorted by residual surface overlap.
Bold marks a violated criterion; the last row is the disattenuated
correlation between GPT-5 mini's map on that rewritten corpus and its
original-seed map.}
\label{tab:paraqc}
\scriptsize
\setlength{\tabcolsep}{2.4pt}
\begin{tabular}{@{}p{2.9cm} ccccccccccc@{}}
\toprule
& \rotatebox{55}{Gemini 3.1 Pro}
& \rotatebox{55}{\shortstack[l]{Gemini 3.1\\Flash-Lite}}
& \rotatebox{55}{\shortstack[l]{Claude\\Haiku 4.5}}
& \rotatebox{55}{GPT-4.1}
& \rotatebox{55}{GPT-4.1 mini}
& \rotatebox{55}{\shortstack[l]{Gemini 2.5\\Flash}}
& \rotatebox{55}{\shortstack[l]{Claude\\Opus 4.8}}
& \rotatebox{55}{Gemini 2.5 Pro}
& \rotatebox{55}{\shortstack[l]{Claude\\Sonnet 4.6}}
& \rotatebox{55}{GPT-5}
& \rotatebox{55}{GPT-5 mini} \\
\midrule
mean 5-gram overlap ($\le 0.05$)
& .010 & .012 & .013 & .015 & .023 & .040 & .045 & .050 & \textbf{.056} & \textbf{.137} & \textbf{.148} \\
records with any reused 5-gram (of 120)
& 72 & 68 & 74 & 84 & 100 & 112 & 110 & 119 & 118 & 120 & 120 \\
worst-record overlap
& .098 & .078 & .067 & .135 & .135 & .271 & .250 & .182 & .266 & .347 & .391 \\
coreness $r$ ($> 0.95$)
& .950 & \textbf{.933} & \textbf{.933} & .959 & .958 & .962 & .984 & .976 & .973 & .981 & .975 \\
length change ($\pm 15\%$)
& $-2.4$ & $\mathbf{-18.7}$ & $\mathbf{-15.2}$ & $+0.6$ & $-7.8$ & $-1.8$ & $+3.2$ & $-4.1$ & $+0.3$ & $-7.5$ & $-3.0$ \\
\midrule
map correlation (disattenuated)
& .95 & .92 & .93 & .96 & .96 & .98 & 1.00 & .97 & .97 & 1.00 & .99 \\
\bottomrule
\end{tabular}
\end{table}

\section{Separating the Factors Bound to a Record}
\label{app:design}

Paraphrase varies the surface wording and holds the sampling design
byte-identical, so it cannot speak to the other properties bound to a
record. Table~\ref{tab:factors} lists the experiments that separate them.
All run GPT-5 mini at round 0 unless noted, and all measure a map against a
null matched to the same panels and budget.

\begin{table}[h]
\centering
\caption{Experiments separating the factors bound to a record. ``Held''
means byte-identical to the reference run. The crossover moves the entire
text to a different record while everything else stays with the slot.}
\label{tab:factors}
\small
\begin{tabular}{llll}
\toprule
experiment & varies & holds & measures \\
\midrule
paraphrase & wording & meaning, metadata, panel, order & surface form \\
crossover & text location & metadata, panel, order & content vs.\ slot \\
resampling & metadata, panel, order, task & text & everything but content \\
\texttt{meta\_only} & metadata & text, panel, order & surname and year \\
\texttt{order\_only} & order & text, metadata, panel & list position \\
\bottomrule
\end{tabular}
\end{table}

\paragraph{Crossover.} Six derangements $\pi_a$ over the 120 records; in arm
$a$, slot $i$ receives the title and abstract of record $\pi_a(i)$ and keeps
its own surname, year, panel memberships and positions. Arms run at the
reference seeds, so the design attaches to the slot automatically. Before
any arm was issued we asserted that panel membership, within-panel order and
every assigned $(\text{surname}, \text{year})$ pair were byte-identical to
the reference run across all prompts, with title and abstract the only text
differing: 3{,}600 cells changed at the 120-prompt design and 14{,}400 at
480, with zero non-text differences. Compliance matches the reference run in
every arm, at 9.86--9.95 distinct papers cited against a cap of ten.
Table~\ref{tab:xover} reports the six arms at the benchmark design.

\begin{table}[h]
\centering
\caption{Content-identity crossover, GPT-5 mini at the 120-prompt design.
Content share is $\beta_{\mathrm{content}}$ over the coefficient sum;
intervals are prompt bootstrap.}
\label{tab:xover}
\begin{tabular}{lccccc}
\toprule
arm & $\beta_{\mathrm{content}}$ & $\beta_{\mathrm{slot}}$ & sum &
content share & 95\% CI \\
\midrule
$\pi_1$ & 0.904 & \phantom{$-$}0.033 & 0.937 & 0.964 & 0.934--0.998 \\
$\pi_2$ & 0.916 & \phantom{$-$}0.029 & 0.945 & 0.969 & 0.933--1.014 \\
$\pi_3$ & 0.881 & \phantom{$-$}0.062 & 0.944 & 0.934 & 0.893--0.978 \\
$\pi_4$ & 0.922 & $-$0.019 & 0.902 & 1.021 & 0.986--1.061 \\
$\pi_5$ & 0.897 & \phantom{$-$}0.075 & 0.972 & 0.923 & 0.890--0.956 \\
$\pi_6$ & 0.911 & \phantom{$-$}0.020 & 0.931 & 0.978 & 0.943--1.016 \\
\midrule
mean & 0.905 & \phantom{$-$}0.033 & 0.939 & 0.964 & 0.928--1.002 \\
\bottomrule
\end{tabular}
\end{table}

Because the two regressors are the same vector under different index
orderings, they carry equal variance and are near-orthogonal for a random
derangement; observed regressor correlations run $-0.12$ to $+0.06$ with
variance inflation factors of 1.00--1.02. Under an additive
content-plus-slot model the coefficients are variance shares scaled by the
reliability of $d^0$, so their sum estimates that reliability rather than
testing against one. At 480 prompts the sum rises from 0.939 to 0.964 as
predictor error falls, and the content share is stable arm by arm across the
two designs. No individual arm's slot coefficient differs from zero under a
record bootstrap, and the spread across arms, including the slightly
negative value at $\pi_4$, sits within sampling error; we report the arms
unadjusted.

One scope note. Because the permuted corpus also changes what a record's
competitors are about, competitor-driven advantage is re-randomized between
the reference run and the crossover rather than held fixed, so it enters the
residual and not $\beta_{\mathrm{slot}}$. The crossover bounds the metadata
and position factors; panel composition is what design resampling covers.

\paragraph{Design resampling.} The single generator that had driven panel
membership, within-panel order, metadata assignment and the task split was
separated into four independent streams, defaulting to values that reproduce
the original run's prompt construction exactly. Six replicates then move all
four to fresh values with the corpus untouched. Each replicate's null is
recomputed against its own panels. Realized exposure is uniform at the
480-prompt design (mean 120.0, s.d.\ 8.4, range 99--144).

\paragraph{Single-factor arms.} Each redraws exactly one stream and holds
the others at reference values, so the distance from the generation ceiling
is a direct measurement of one factor with nothing inferred by subtraction.
An earlier attribution of this residual to list order, obtained by
subtracting a linear year term from the split-half comparison, was wrong:
split-half partitions prompts and therefore already varies order, so that
comparison isolates metadata rather than order. The direct arms reverse the
attribution, and the year term recovers only a fifth of the metadata factor
because surname identity is unmodelled by it.

\paragraph{Cross-model replication.} The crossover (three derangements) and
the metadata arm were repeated on GPT-4.1 mini at 480 prompts against
byte-identical designs, asserted on the issued prompts rather than assumed.
That model's generation ceiling is 0.976 at 480, against 0.993 for GPT-5
mini, a factor of 3.4 in variance-share units, so every gap is converted to
$(1-r)/r$ before any cross-model comparison and each model is read against
its own ceiling. The resolution threshold for declaring the two metadata
channels different, $0.015$, was fixed before the arms were run.

\section{The Selection-Only Arm}
\label{app:selonly}

The experts selected papers; the models in the main protocol wrote reviews
with citations embedded. To remove that asymmetry we reran all eleven models
on the 53 expert-covered prompts, asking each to return the papers it would
cite as a list of $(\text{Surname}, \text{Year})$ strings and nothing else.
Panels, blinding, task sentence and the cap of 10 are unchanged, and because
the citation format is unchanged the parser is the one used everywhere else.
Realized budgets stay at the cap, from 9.96 to 10.00 across models, against
2--11 for the experts, so each null inherits its own selector's budget as
before.

\begin{table}[h]
\centering\small
\caption{Selection-only mode on the 53 expert-covered prompts. Nulls are
matched per selector. Review-mode values on the same slice are 24.5--31.9\%
top-decile share and 8--26 exclusions.}
\label{tab:selonly}
\begin{tabular}{lrrrr}
\toprule
selector & top-10\% & null & uncited & null \\
\midrule
Claude Sonnet 4.6      & 34.5\% & 18.2\% & 51 & 1.1 \\
Gemini 2.5 Pro         & 34.5\% & 18.2\% & 46 & 1.0 \\
Claude Opus 4.8        & 33.8\% & 18.3\% & 53 & 1.1 \\
GPT-5                  & 31.9\% & 18.2\% & 53 & 1.1 \\
GPT-5 mini             & 31.3\% & 18.2\% & 35 & 1.1 \\
Gemini 2.5 Flash       & 30.9\% & 18.3\% & 29 & 1.0 \\
Claude Haiku 4.5       & 28.8\% & 18.3\% & 22 & 1.1 \\
Gemini 3.1 Pro         & 28.2\% & 18.2\% & 15 & 1.1 \\
GPT-4.1                & 27.4\% & 18.2\% & 14 & 1.0 \\
Gemini 3.1 Flash-Lite  & 24.0\% & 18.3\% & 16 & 1.0 \\
GPT-4.1 mini           & 22.9\% & 18.3\% &  6 & 1.0 \\
\midrule
pooled experts         & 18.9\% & 18.7\% &  2 & 1.5 \\
\bottomrule
\end{tabular}
\end{table}

Concentration survives the format change in every model, and the exclusion
counts rise rather than fall, reaching 53 of the 120 papers where review mode
on the same slice reached 26. The human-model gap therefore does not come
from the models having had to compose an argument.

\section{The Within-Panel Competition Test}
\label{app:regression}

Table~\ref{tab:regression} reports the full specification of
Eq.~\eqref{eq:competition} for all eight recursion models. Rows are (shown
seed, prompt) pairs over rounds 1--11 (7{,}557 rows in 1{,}292 prompt
clusters per model); the regressor is the seed's number of co-shown seeds;
round fixed effects absorb everything that drifts across rounds, seed fixed
effects absorb paper quality, a review-task dummy absorbs the prompt-type
split, and standard errors are clustered by prompt. The seed-side
coefficients are the analysis pipeline's; an independent reimplementation
for this appendix (Frisch--Waugh with alternating demeaning) reproduces
every coefficient and standard error to within 0.05\,pp.

Two companion columns complete the design. The \emph{placebo} runs the
identical regression with the outcome replaced by the citation frequency of
the same seed in the same prompt across 40 uniform redraws inheriting the
panel and budget; under the null the coefficient is exactly zero in
expectation, and the estimates sit within $\pm 0.08$\,pp for every model
(single redraws instead of the 40-redraw average scatter with the expected
sampling noise of one draw, about $\pm 0.5$\,pp, which is why the placebo is
defined on the average). The \emph{generated-side} regression replaces the
rows with (shown generated paper, prompt) pairs (32{,}043 rows), keeps the
same regressor, and uses generated-paper fixed effects. Its slopes are
smaller than the seed-side slopes for the strongly lifted models, by factors
of 3.3 to 4.4 for Gemini 3.1 Pro, GPT-5 and Gemini 3.1 Flash-Lite and
eighteen-fold for GPT-5 mini, and are of mixed sign ($-0.99$ to $+0.46$),
with the two weakest-lift models showing positive coefficients. The
exception is Gemini 2.5 Flash, whose two slopes are nearly equal ($-1.05$
against $-0.99$). The seed-side pattern is therefore not in general a
generic crowding artifact of panel composition, though for that one model
the two cannot be separated. Round 0 is excluded throughout because its panels contain 30
seeds and the regressor has no variation.

\begin{table}[ht]
    \centering
    \caption{The within-panel competition test, full table. Seed-side
    $\beta$ (pp per additional co-shown seed) with prompt-clustered
    standard errors and $t$; the 40-redraw placebo; and the generated-side
    slope. Sorted by seed-side $\beta$.}
    \label{tab:regression}
    \small
    \begin{tabular}{@{}lrrrrrr@{}}
        \toprule
        Model & $\beta_{\mathrm{seed}}$ & (SE) & $t$ &
        $\beta_{\mathrm{gen}}$ & (SE) & Placebo \\
        \midrule
        Gemini 3.1 Pro & $-3.07$ & $(0.27)$ & $-11.3$ & $-0.92$ & $(0.13)$ & $+0.035$ \\
        GPT-5 & $-2.57$ & $(0.21)$ & $-12.2$ & $-0.59$ & $(0.10)$ & $+0.031$ \\
        GPT-5 mini & $-1.80$ & $(0.22)$ & $-8.4$ & $-0.10$ & $(0.10)$ & $+0.073$ \\
        Gemini 3.1 Flash-Lite & $-1.73$ & $(0.20)$ & $-8.5$ & $-0.51$ & $(0.09)$ & $-0.023$ \\
        Gemini 2.5 Flash & $-1.05$ & $(0.25)$ & $-4.2$ & $-0.99$ & $(0.09)$ & $-0.015$ \\
        GPT-4.1 & $-0.94$ & $(0.26)$ & $-3.7$ & $0.37$ & $(0.11)$ & $+0.029$ \\
        GPT-4.1 mini & $-0.35$ & $(0.23)$ & $-1.5$ & $0.46$ & $(0.10)$ & $+0.017$ \\
        Gemini 2.5 Pro & $-0.31$ & $(0.23)$ & $-1.3$ & $-0.11$ & $(0.10)$ & $+0.018$ \\
        \bottomrule
    \end{tabular}
\end{table}

\section{The Seed Corpus}
\label{app:seeds}

The 120 real knowledge-distillation papers, listed with their primary
domain label as used by the probes of Section~\ref{sec:mapcontent}. Titles
are the papers' real titles (real titles and abstracts are displayed in the
benchmark; the blinded fields are authors, years, counts, and venues, per
App.~\ref{app:corpus}).

{\footnotesize
\begin{longtable}{@{}p{0.7cm}p{10.6cm}p{2.9cm}@{}}
    \caption{The seed corpus.}\label{tab:seeds}\\
    \toprule
    \# & Title & Domain \\
    \midrule
    \endfirsthead
    \toprule
    \# & Title & Domain \\
    \midrule
    \endhead
    \bottomrule
    \endfoot
        0 & Less is More: Task-aware Layer-wise Distillation for Language Model Compression & NLP \\
        1 & Mitigating Gender Bias in Distilled Language Models via Counterfactual Role Reversal & NLP \\
        2 & A Deep Hierarchical Approach to Lifelong Learning in Minecraft & RL/Policy \\
        3 & BERT Learns to Teach: Knowledge Distillation with Meta Learning & NLP \\
        4 & DE-RRD: A Knowledge Distillation Framework for Recommender System & RecSys \\
        5 & Lifelong Language Knowledge Distillation & NLP \\
        6 & Knowledge Distillation from Internal Representations & NLP \\
        7 & Knowledge Distillation for Improved Accuracy in Spoken Question Answering & Speech \\
        8 & MEAL: Multi-Model Ensemble via Adversarial Learning & CV \\
        9 & Learning Student-Friendly Teacher Networks for Knowledge Distillation & CV \\
        10 & Self-Knowledge Distillation in Natural Language Processing & NLP \\
        11 & Knowledge Distillation in Wide Neural Networks: Risk Bound, Data Efficiency and Imperfect Teacher & Theory \\
        12 & Highlight Every Step: Knowledge Distillation via Collaborative Teaching & CV \\
        13 & Channel Distillation: Channel-Wise Attention for Knowledge Distillation & CV \\
        14 & Adversarially Robust Distillation & CV \\
        15 & Towards Understanding Ensemble, Knowledge Distillation and Self-Distillation in Deep Learning & Theory \\
        16 & Distilling Knowledge From a Deep Pose Regressor Network & CV \\
        17 & Knowledge Distillation with Adversarial Samples Supporting Decision Boundary & CV \\
        18 & Explaining Knowledge Distillation by Quantifying the Knowledge & Theory \\
        19 & Selective Knowledge Distillation for Neural Machine Translation & NLP \\
        20 & Robust and Resource-Efficient Data-Free Knowledge Distillation by Generative Pseudo Replay & CV \\
        21 & Understanding Knowledge Distillation in Non-autoregressive Machine Translation & NLP \\
        22 & Show, Attend and Distill: Knowledge Distillation via Attention-based Feature Matching & CV \\
        23 & Reinforced Multi-Teacher Selection for Knowledge Distillation & NLP \\
        24 & Confidence-Aware Multi-Teacher Knowledge Distillation & CV \\
        25 & Feature-map-level Online Adversarial Knowledge Distillation & CV \\
        26 & Compressing Deep Graph Neural Networks via Adversarial Knowledge Distillation & Graph \\
        27 & Zero-Shot Knowledge Distillation in Deep Networks & CV \\
        28 & Fair Feature Distillation for Visual Recognition & CV \\
        29 & Distilling Policy Distillation & RL/Policy \\
        30 & Distilling Object Detectors with Fine-grained Feature Imitation & CV \\
        31 & Attention-Guided Answer Distillation for Machine Reading Comprehension & NLP \\
        32 & BERT-of-Theseus: Compressing BERT by Progressive Module Replacing & NLP \\
        33 & Does Knowledge Distillation Really Work? & Theory \\
        34 & Learning Generalizable Models for Vehicle Routing Problems via Knowledge Distillation & RL/Policy \\
        35 & Object DGCNN: 3D Object Detection using Dynamic Graphs & CV \\
        36 & Apprentice: Using Knowledge Distillation Techniques To Improve Low-Precision Network Accuracy & CV \\
        37 & Transferring Inductive Biases through Knowledge Distillation & Theory \\
        38 & Multi-Label Image Classification via Knowledge Distillation from Weakly-Supervised Detection & CV \\
        39 & Linkless Link Prediction via Relational Distillation & Graph \\
        40 & DINE: Domain Adaptation from Single and Multiple Black-box Predictors & CV \\
        41 & Improving the Interpretability of Deep Neural Networks with Knowledge Distillation & CV \\
        42 & Understanding BERT Rankers Under Distillation & NLP \\
        43 & FedZKT: Zero-Shot Knowledge Transfer towards Resource-Constrained Federated Learning with Heterogeneous On-Device Models & Federated \\
        44 & Compressing Visual-linguistic Model via Knowledge Distillation & Multimodal \\
        45 & Self-Knowledge Distillation with Progressive Refinement of Targets & CV \\
        46 & Moonshine: Distilling with Cheap Convolutions & CV \\
        47 & Model Compression with Two-stage Multi-teacher Knowledge Distillation for Web Question Answering System & NLP \\
        48 & What Makes a Good Data Augmentation in Knowledge Distillation -- A Statistical Perspective & Theory \\
        49 & Collaborative Distillation for Ultra-Resolution Universal Style Transfer & CV \\
        50 & A Closer Look at Deep Learning Heuristics: Learning rate restarts, Warmup and Distillation & Theory \\
        51 & Large scale distributed neural network training through online distillation & Theory \\
        52 & Embracing the Dark Knowledge: Domain Generalization Using Regularized Knowledge Distillation & CV \\
        53 & Revisiting Knowledge Distillation via Label Smoothing Regularization & Theory \\
        54 & Online Knowledge Distillation via Mutual Contrastive Learning for Visual Recognition & CV \\
        55 & Data-Free Adversarial Distillation & CV \\
        56 & Distilling the Knowledge of BERT for Sequence-to-Sequence ASR & Speech \\
        57 & Improved Feature Distillation via Projector Ensemble & CV \\
        58 & Rethinking Soft Labels for Knowledge Distillation: A Bias-Variance Tradeoff Perspective & Theory \\
        59 & TRILLsson: Distilled Universal Paralinguistic Speech Representations & Speech \\
        60 & A Fast Knowledge Distillation Framework for Visual Recognition & CV \\
        61 & ALP-KD: Attention-Based Layer Projection for Knowledge Distillation & NLP \\
        62 & Well-Read Students Learn Better: On the Importance of Pre-training Compact Models & NLP \\
        63 & Curriculum Learning for Dense Retrieval Distillation & NLP \\
        64 & BAM! Born-Again Multi-Task Networks for Natural Language Understanding & NLP \\
        65 & Private Model Compression via Knowledge Distillation & CV \\
        66 & Is Label Smoothing Truly Incompatible with Knowledge Distillation: An Empirical Study & Theory \\
        67 & Towards Model Agnostic Federated Learning Using Knowledge Distillation & Federated \\
        68 & PKD: General Distillation Framework for Object Detectors via Pearson Correlation Coefficient & CV \\
        69 & Knowledge Distillation from A Stronger Teacher & CV \\
        70 & Preserving Privacy in Federated Learning with Ensemble Cross-Domain Knowledge Distillation & Federated \\
        71 & Improving Neural Topic Models using Knowledge Distillation & NLP \\
        72 & Ensemble Knowledge Distillation for CTR Prediction & RecSys \\
        73 & Few Shot Network Compression via Cross Distillation & CV \\
        74 & On Representation Knowledge Distillation for Graph Neural Networks & Graph \\
        75 & Rejuvenating Low-Frequency Words: Making the Most of Parallel Data in Non-Autoregressive Translation & NLP \\
        76 & Distilled Semantics for Comprehensive Scene Understanding from Videos & CV \\
        77 & Federated Knowledge Distillation & Federated \\
        78 & Unifying Heterogeneous Classifiers with Distillation & CV \\
        79 & TextKD-GAN: Text Generation using Knowledge Distillation and Generative Adversarial Networks & NLP \\
        80 & Ranking Distillation: Learning Compact Ranking Models With High Performance for Recommender System & RecSys \\
        81 & Ensemble Distillation for Neural Machine Translation & NLP \\
        82 & Understanding and Improving Lexical Choice in Non-Autoregressive Translation & NLP \\
        83 & Towards Practical Lipreading with Distilled and Efficient Models & Speech \\
        84 & Knowledge Adaptation: Teaching to Adapt & NLP \\
        85 & Zero-shot Knowledge Transfer via Adversarial Belief Matching & CV \\
        86 & Extract the Knowledge of Graph Neural Networks and Go Beyond it: An Effective Knowledge Distillation Framework & Graph \\
        87 & Efficient Transformer-based Large Scale Language Representations using Hardware-friendly Block Structured Pruning & NLP \\
        88 & Distilling Task-Specific Knowledge from BERT into Simple Neural Networks & NLP \\
        89 & Graph-less Neural Networks: Teaching Old MLPs New Tricks via Distillation & Graph \\
        90 & Unified Visual Transformer Compression & CV \\
        91 & Robust Cross-Modal Representation Learning with Progressive Self-Distillation & Multimodal \\
        92 & AlphaNet: Improved Training of Supernets with Alpha-Divergence & CV \\
        93 & CoReD: Generalizing Fake Media Detection with Continual Representation using Distillation & CV \\
        94 & BERT-EMD: Many-to-Many Layer Mapping for BERT Compression with Earth Mover's Distance & NLP \\
        95 & Compression of Deep Learning Models for Text: A Survey & Survey \\
        96 & AutoGAN-Distiller: Searching to Compress Generative Adversarial Networks & CV \\
        97 & VPN++: Rethinking Video-Pose embeddings for understanding Activities of Daily Living & CV \\
        98 & Distilling Knowledge by Mimicking Features & CV \\
        99 & Compressing GANs using Knowledge Distillation & CV \\
        100 & Robust Re-Identification by Multiple Views Knowledge Distillation & CV \\
        101 & Releasing Graph Neural Networks with Differential Privacy Guarantees & Graph \\
        102 & Multi-Level Branched Regularization for Federated Learning & Federated \\
        103 & Data-Free Knowledge Distillation for Deep Neural Networks & CV \\
        104 & MixKD: Towards Efficient Distillation of Large-scale Language Models & NLP \\
        105 & Enabling Multimodal Generation on CLIP via Vision-Language Knowledge Distillation & Multimodal \\
        106 & Learning Student Networks via Feature Embedding & CV \\
        107 & Distilling portable Generative Adversarial Networks for Image Translation & CV \\
        108 & AdaBERT: Task-Adaptive BERT Compression with Differentiable Neural Architecture Search & NLP \\
        109 & No One Left Behind: Inclusive Federated Learning over Heterogeneous Devices & Federated \\
        110 & Learning Compact Metrics for MT & NLP \\
        111 & SEED: Self-supervised Distillation For Visual Representation & CV \\
        112 & Cross-modal Knowledge Distillation for Vision-to-Sensor Action Recognition & CV \\
        113 & R2L: Distilling Neural Radiance Field to Neural Light Field for Efficient Novel View Synthesis & CV \\
        114 & Towards Efficient 3D Object Detection with Knowledge Distillation & CV \\
        115 & Knowledge distillation from multi-modal to mono-modal segmentation networks & CV \\
        116 & Distilling Knowledge from Deep Networks with Applications to Healthcare Domain & CV \\
        117 & Learning Perception-Aware Agile Flight in Cluttered Environments & RL/Policy \\
        118 & Adapt-and-Distill: Developing Small, Fast and Effective Pretrained Language Models for Domains & NLP \\
        119 & Knowledge Distillation for Small-footprint Highway Networks & Speech \\
\end{longtable}
}

\section{Recursion Diagnostics}
\label{app:diagnostics}

Distributional texture of the recursion, per model per round; the results
of the paper do not depend on any of it (Figure~\ref{fig:texture}). Gini over the round's shown papers
rises for every model, from 0.37--0.48 at round 1 to 0.54--0.64 at round
11, ordered like the map-side quantities of
Section~\ref{sec:recursion}; the top-1\% citation share rises in parallel
(roughly 4\% to 5--6\%). Openness to new work mirrors the funnel. The share of citations going to
the newest available cohort sits below that cohort's exposure share for every lifted model (5.8--8.9\% against 9.2\% at
round 11), while the two near-liftless models sit at or slightly above it
(9.6\% and 10.2\%); a lift toward the old necessarily starves the new, and
the models without the lift do not. Top-decile turnover between consecutive
rounds is high for all models (top-decile turnover 0.86--0.92 at round 11),
which is mechanical while the catalogue grows by 120 papers a round;
turnover is why the elite's identity churns even as its \emph{share}
concentrates.

\begin{figure}[ht]
    \centering
    \includegraphics[width=\textwidth]{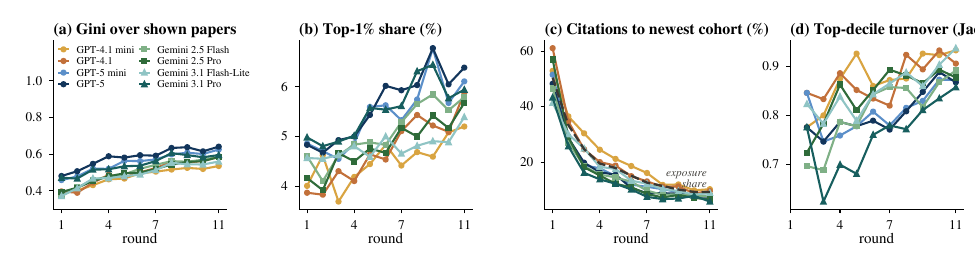}
    \caption{\textbf{Recursion texture} (8 models, rounds 1--11).
    (a)~Gini of citation counts over shown papers. (b)~Top-1\% citation
    share. (c)~Share of citations to the newest available cohort against
    its exposure share (dashed): lifted models under-cite the new.
    (d)~Top-decile turnover (Jaccard distance) between consecutive rounds.}
    \label{fig:texture}
\end{figure}

\end{document}